\newtheorem{theorem}{Theorem}
\newtheorem{lemma}{Lemma}
\newtheorem{corollary}{Corollary}
\theoremstyle{definition}
\newtheorem{definition}{Definition}
\def\O{\mathcal{O}}
\def\T{\mathcal{T}}
\newcommand\wt[1]{\overline{#1}}
\newcommand{\loc}{{\text{loc}}}
\newcommand{\const}{{\text{const.}}}
\let\@fnsymbol\@arabic
\renewcommand{\O}[1]{\mathcal{O}\!\left(#1\right)}
\newcommand{\Otilde}[1]{\tilde{\mathcal{O}}\!\left(#1\right)}
\begin{document}

\title{Hamiltonian Simulation in the Interaction Picture using the Magnus Expansion}

\author{Kunal Sharma}
\email{kunals@ibm.com}
\affiliation{IBM~Quantum,~IBM~T.J.~Watson~Research~Center,~Yorktown~Heights,~NY~10598,~USA}
\author{Minh C. Tran}
\email{minhtran@ibm.com}
\affiliation{IBM~Quantum,~IBM~T.J.~Watson~Research~Center,~Yorktown~Heights,~NY~10598,~USA}

\newpage

\begin{abstract}
We propose an algorithm for simulating the dynamics of a geometrically local Hamiltonian~$A$ under a small geometrically local perturbation~$\alpha B$.
In certain regimes, the algorithm achieves the optimal scaling and outperforms the state-of-the-art algorithms. 
By moving into the interaction frame of $A$ and classically computing the Magnus expansion of the interaction-picture Hamiltonian, our algorithm bypasses the need for ancillary qubits.
In analyzing its performance, we develop a framework to capture the quasi-locality of the Magnus operators, leading to a tightened bound for the error of the Magnus truncation.
The Lieb-Robinson bound also guarantees the efficiency of computing the Magnus operators and of their subsequent decomposition into elementary quantum gates.
These features make our algorithm appealing for near-term and early-fault-tolerant simulations.
\end{abstract}

\maketitle

\section{Introduction} \label{sec:introduction}

Hamiltonian simulation is one of the most exciting applications of quantum computers. 
Their universality guarantees that an arbitrary unitary evolution can be decomposed into a series of quantum gates.
Leading quantum algorithms for such decomposition include product formulas~\cite{Suzuki1991,Childs2019NearlyOptimal,childs_theory_2021}, linear combinations of unitaries~\cite{childs_hamiltonian_nodate,berrySimulatingHamiltonianDynamics2014new}, quantum signal processing~\cite{lowOptimalHamiltonianSimulation2017new,lowHamiltonianSimulationQubitization2019new}, and stochastic quantum simulation algorithms~\cite{childsFasterQuantumSimulation2019new,campbellRandomCompilerFast2019new,ouyangCompilationStochasticHamiltonian2020new}.
These algorithms each have their own advantages depending on the structure of the Hamiltonian.
High-order product formulas, for example, achieve nearly optimal scaling in gate complexity when applied to geometrically local Hamiltonians~\cite{childs_theory_2021,haah2021quantum}, while qDRIFT \cite{campbellRandomCompilerFast2019new} excels when there are multiple energy scales in the Hamiltonian.

In this paper, we consider a family of Hamiltonians $H = A + \alpha B$, where $\alpha B$ represents a small perturbation to the unperturbed Hamiltonian $A$.
Treatment of such Hamiltonians in the perturbation theory has led to important results, including Fermi's golden rule and quantum chemistry beyond the Born-Oppenheimer approximation.
Promising a systematic pathway to explore Hamiltonian dynamics under perturbation, various quantum simulation algorithms have been introduced.
A popular approach is to move into the interaction frame of $A$ and use time-dependent quantum simulation algorithms to simulate the resulting interaction-picture Hamiltonians~\cite{low_hamiltonian_2019,an_time-dependent_2022,rajputHybridizedMethodsQuantum2022new,chenQuantumAlgorithmTimedependent2021new}.
This approach often requires ancillary qubits and controlled-evolutions, making it challenging to implement ons near-term quantum computers.
Stochastic algorithms can bypass these resource requirements, but their asymptotic complexity is only comparable to that of low-order deterministic algorithms.

We propose an algorithm to simulate the evolution under a geometrically local Hamiltonian $H = A + \alpha B$.
The algorithm approximates the evolution in the interaction frame of $A$ by a truncated Magnus expansion.
Each Magnus operator is further truncated based on the Lieb-Robinson bound, allowing an efficient decomposition into the Pauli basis.
The evolution generated by the truncated Magnus operators is then implemented using product formulas.
Compared to existing results, our algorithm displays several favorable features, including using no ancillas, being independent of how fast $H$ changes with time, and achieving the optimal gate complexity in certain regimes.

\section{Setup}\label{sec:setup}

Given a system of $n$ qubits on a one-dimensional regular lattice $\Lambda$, a Hamiltonian $H$ is \emph{geometrically local} if there exists a decomposition of $H = \sum_{i = 1}^n H_{i}$ such that $H_{i}$ is supported entirely on a neighborhood of diameter $\chi = O(1)$ including $i$.
To set the energy unit, we further assume that the spectral norm $\norm{H_i} \leq 1$ for every site $i$.
We call $\chi$ the \emph{interaction range} of $H$.
For example, $\chi = 1$ if $H$ is  a nearest-neighbor Hamiltonian.

In this work, we consider a Hamiltonian, $H = A + \alpha B$, where both $A$ and $B$ are geometrically local Hamiltonians.
Let $\chi$ be the interaction range of $A$.
Below, we illustrate our algorithm with time-independent $A$ and $B$ and later discuss the generalization to time-dependent Hamiltonians.

To set a baseline for later comparison, we briefly explain the complexity of simulating
$e^{-iHt}$ using the product formulas.
For that, we divide $[0,t]$ into $r$ equal time intervals so that $U(t) = U(t/r)^r$ and use the first-order product formula to approximate
\begin{align}
    U(t/r)\approx \prod_i e^{-i  A_{i}t/r } \prod_j e^{-i \alpha B_{j}t/r }.\label{eq:1st-trotter-1step}
\end{align}
The total error of this approximation can be upper bounded by
  $\O{ n t^2/r}$~\cite{childs_theory_2021},
which is independent of $\alpha$.
Given a fixed error tolerance, we choose $r \propto  n t^2 $.
In each interval, we use $\O{n}$ gates to implement the evolutions under $A_i$ and $B_i$, resulting in the total gate count
\begin{align}
    \mathcal G_{\text{Trotter,1}}  
    =\O { n^2 t^2}.\label{eq:gate-count-trotter-1}
\end{align}
Similarly, using a $p$th-order product formula would result in a gate complexity
\begin{align}
    \mathcal G_{\text{Trotter,p}}  
    =\O { (nt)^{1+1/p}},\label{eq:gate-count-trotter-p}
\end{align}
which approaches $\Otilde{ n t}$ in the large-$p$ limit.

Throughout the paper, we use $\Otilde{\cdot}$ to denote $\O{\cdot}$ up to subpolynomial corrections.
This gate count scales optimally with $n$ and $t$~\cite{haah2021quantum} and is independent of $\alpha$.
Meanwhile, one would expect that, if $A$ is fast-forwardable,  the gate count of simulating $H = A+\alpha B$ would scale proportionally with $\alpha$.

In the limit of small $\alpha$, $\alpha B$ represents a perturbation to the Hamiltonian $A$.
Naturally, we can move into the interaction frame of $A$ and write $U(t) = U_A(t) U_{BI}(t)$, where $U_A(t) = e^{-iAt}$ and 
$U_{BI}(t)$ is the evolution generated by the interaction-picture Hamiltonian
\begin{align}
    B_I(t) = \alpha U_A(t)^\dag B U_A(t).
\end{align}

Various quantum algorithms have been developed to simulate the interaction-picture evolution~\cite{low_hamiltonian_2019,berry_time-dependent_2020,an_time-dependent_2022,chenQuantumAlgorithmTimedependent2021new}.
Notably, Ref.~\cite{low_hamiltonian_2019} uses the truncated Dyson series to approximate $U_{BI}(t)$, resulting in a gate complexity $\Otilde {\alpha n^2 t}$ assuming the fast-forwardability of $A$.
Compared to high-order product formulas, this approach gains a favourable dependence on $\alpha$, but at the same time scales quadratically worst with $n$.
Another avenue to approximating $B_I(t)$ is the Magnus expansion~\cite{blanes2009magnus,arnal2018general}, which is defined as follows.
Let  $\alpha \norm{B} t \leq 1$. Then
there exists $\Omega(t) = \sum_{j = 1}^{\infty} \Omega_j(t)$ such that $U_{BI}(t) = e^{\Omega(t)}$, where $\Omega_j(t)$ is of the $j$th-order in $\alpha\norm{B}t$.
Truncating the expansion by keeping the first $q$ terms results in an approximation
\begin{align}
    U_{BI}(t) \approx \exp\left(\sum_{j = 1}^q \Omega_j(t)\right).
\end{align}
For example, at $q = 1$, we get $e^{\Omega_1(t)}$, where~$\Omega_1(t) = -i\int_0^t B_I(s)ds$ (See \cref{app:preliminaries} for a brief review).
Through this approximation, Ref.~\cite{an_time-dependent_2022} further approximates $\Omega_1(t)$ by a discrete sum, allowing $e^{\Omega_1(t)}$ to be implemented via a technique based on linear combinations of unitaries~\cite{childs_hamiltonian_nodate}.

Here, we propose to simply evaluate the truncated Magnus expansion, e.g., $\Omega_1(t)$, \emph{numerically}. 
This strategy has been previously explored in the context of simulating $2$-local Hamiltonians~\cite{m:chenQuantumSimulationHighlyoscillatory2023}. 
In our case, the interaction-picture Hamiltonian $B_I(t)$ at $t>0$ can be supported everywhere on the lattice, presenting a challenge for both the classical computation of the truncated Magnus expansion and its subsequent decomposition into elementary quantum gates. 
We overcome these challenges by showing that, for a small time step, the Lieb-Robinson bound guarantees an efficient Pauli decomposition of $\Omega_j(t)$ for all constant $j$.
In exchange for this mild increase in the classical computational overhead, our approach saves on additional quantum registers and achieves more favorable scaling than other approaches in certain regimes.

\section{Main results}

In this section, we illustrate the approximation of the time evolution under $H = A + \alpha B$ using the truncated Magnus expansion. Consider that 
\begin{align}
    e^{-i H t} &=  e^{-i A t} \mathcal{T}\exp\left( -i \alpha \int_0^t e^{i A s} B e^{-i As} ds \right) \nonumber\\
    &= e^{-i A t} e^{\Omega(t)}
    \approx e^{-i At}e^{\overline \Omega(t)},
\end{align}
where $\overline \Omega(t) \equiv \sum_{j = 1}^q \Omega_j(t)$.
For $t = O(1)$, we truncate terms outside the Lieb-Robinson lightcone to approximate $\overline \Omega(t)$ by a quasi-local operator  $\overline \Omega_\loc(t)$.
The locality of $\overline \Omega_\loc(t)$ allows us to efficiently estimate its Pauli decomposition.
Finally, we approximate the evolution $e^{\overline \Omega_\loc(t)}$ using a $p$th-order product formula.
In the following analysis, we use $V_{q,p}(t)$ to denote the resulting  unitary.  
For large $t$, we divide it into $r$ time intervals and repeat the decomposition for each interval as follows: 
\begin{align}
    U(t) = \left[e^{-iAt/r} V_{q,p}(t/r)\right]^r.\label{eq:moving_into_int_picture}
\end{align}

There are three contributions to the error of this simulation:
i) the truncation of the Magnus expansion, 
ii) the Lieb-Robinson light-cone approximation of $\overline \Omega$ by $\overline \Omega_\loc$,
iii) and the product formula for decomposing $e^{\overline \Omega_\loc}$.
We first consider the case $t = O(1)$ in the following analysis.

The Magnus expansion truncation error can be bounded using standard techniques, giving 
\begin{align}\label{eq:prem-magnus-bound}
    \Vert e^{\Omega(t)} - e^{\overline\Omega(t)}\Vert = \O{(\alpha n  t)^{q+1}}. 
\end{align}
Exploiting the locality of the Hamiltonian, we arrive at a tighter bound.

\begin{theorem}\label{thm:magnus-truncation-error}
Let $A$ be a geometrically local Hamiltonian and 
$B = \sum_{X\subset \Lambda} B_X$ be a $k$-local Hamiltonian such that $B_X$ is supported on a subset $X$ of at most $k$ qubits.
Let $d\equiv \max_{i \in \Lambda} \sum_{X: i \in X} \norm{B_X}$ denote the effective interaction degree of $B$.
Let $e^{\Omega(t)} = \mathcal{T}[ e^{ \int_0^t B_I(\tau) d\tau} ]$, where $B_I(\tau) = -i\alpha e^{i A \tau} B e^{-i A \tau}$. 
Let $\overline{\Omega}(t) = \sum_{j = 1}^q \Omega_j(t) $ be the truncated Magnus expansion of order $q$.
We have
\begin{align}
	\Vert e^{\Omega(t)} - e^{\overline{\Omega}(t)} \Vert \leq \mathcal{O}(n(\alpha d t)^{q+1})~. \label{eq:magnus-truncation-error} 
\end{align}
\end{theorem}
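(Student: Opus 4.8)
The plan is to reduce the operator-norm error to a bound on the tail of the Magnus series and then control each Magnus operator $\Omega_j(t)$ using the locality of $B$ together with a Lieb-Robinson estimate. First I would note that $B_I(\tau)=-i\alpha e^{iA\tau}Be^{-iA\tau}$ is anti-Hermitian, so both $e^{\Omega(t)}$ and $e^{\overline{\Omega}(t)}=\exp(\sum_{j=1}^{q}\Omega_j(t))$ are unitary. Using $\|e^{X}-e^{Y}\|\le\|X-Y\|$ for anti-Hermitian $X,Y$ (which follows from $e^{X}-e^{Y}=\int_0^1 e^{sX}(X-Y)e^{(1-s)Y}\,ds$ and unitarity of $e^{sX},e^{(1-s)Y}$), the problem collapses to
\begin{align}
\|e^{\Omega(t)}-e^{\overline{\Omega}(t)}\|\le\Big\|\sum_{j=q+1}^{\infty}\Omega_j(t)\Big\|\le\sum_{j=q+1}^{\infty}\|\Omega_j(t)\|,
\end{align}
so everything hinges on a per-order estimate of the form $\|\Omega_j(t)\|\le n(c\,\alpha d t)^{j}$, after which the claim follows by summing the geometric tail in the regime $c\,\alpha d t<1$.

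To obtain that estimate I would use the standard representation of $\Omega_j(t)$ as a finite sum over nested-commutator (``tree'') structures of $B_I$ integrated over the ordered simplex $0\le\tau_j\le\dots\le\tau_1\le t$, with rational coefficients whose total magnitude at order $j$ I denote $\kappa_j$. The loose bound of Eq.~\eqref{eq:prem-magnus-bound} comes from estimating each integrand by $2^{j-1}\prod_i\|B_I(\tau_i)\|\le 2^{j-1}(\alpha\|B\|)^{j}$ with $\|B\|=\O{n}$; since the simplex contributes $t^{j}/j!$ and finite convergence radius forces $\kappa_j\sim j!\,c_0^{j}$, the factorials cancel and one recovers $\O{(\alpha n t)^{q+1}}$. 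The improvement comes from \emph{not} discarding the commutator structure: expanding $B=\sum_X B_X$ turns each integrand into a sum over choices $X_1,\dots,X_j$ of nested commutators of the local pieces $B_{X_i,I}(\tau_i)$, and conjugation invariance of the norm lets me shift the outermost time to zero so that one factor is the \emph{static} $B$ while the rest are Heisenberg-evolved over times $\le t$.

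The key estimate, and the main obstacle, is to show that this sum over $X_1,\dots,X_j$ is bounded by $n(Cd)^{j}$ \emph{with no factorial}. A nested commutator of the $B_{X_i,I}$ is non-negligible only when the $X_i$ form a connected cluster: for $t=\O{1}$ the Lieb-Robinson bound confines each evolved $B_{X,I}(\tau)$ to an $\O{1}$ light cone, so $\|[B_{X,I}(\tau),B_{Y,I}(\tau')]\|$ decays once $\mathrm{dist}(X,Y)$ exceeds $v|\tau-\tau'|$, and the pairwise decays along the commutator tree enforce near-connectedness up to exponentially small, summable tails. Counting connected clusters then reduces to one free ``root,'' contributing $\sum_X\|B_X\|=\O{nd}$, times a bounded-degree branching factor $\O{d}$ for each of the remaining $j-1$ terms; because the number of rooted connected clusters of size $j$ in a bounded-degree geometry is single-exponential in $j$, this cluster sum carries no $j!$, so the $\kappa_j\sim j!\,c_0^{j}$ and the simplex $1/j!$ cancel to leave the purely geometric $\|\Omega_j(t)\|\le n(c\,\alpha d t)^{j}$. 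The delicate part will be bounding a general tree-nested commutator of spread-out operators by a product of pairwise Lieb-Robinson factors along the tree edges while keeping the combinatorial prefactors single-exponential; once that is in place, summing $\sum_{j>q}n(c\,\alpha d t)^{j}=\O{n(\alpha d t)^{q+1}}$ completes the proof.
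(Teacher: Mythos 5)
Your reduction to a tail bound $\sum_{j>q}\|\Omega_j(t)\|$ hinges on the per-order estimate $\|\Omega_j(t)\|\le n(c\,\alpha d t)^j$ with a $j$-independent, single-exponential constant, and this is precisely where the argument breaks. The weighted sum over ordered tuples $(X_1,\dots,X_j)$ whose nested commutator survives is not single-exponential in $j$: the survival condition is that each $X_{i+1}$ intersects the union $X_1\cup\dots\cup X_i$, whose size grows like $ki$, so the weighted count is bounded by $nd\prod_{i=1}^{j-1}(2kid)=nd\,(2kd)^{j-1}(j-1)!$, and in the worst case this factorial is real. Only the number of connected cluster \emph{sets} is single-exponential; the nested-commutator sum runs over \emph{ordered sequences}, and e.g.\ a star-shaped cluster admits $(j-1)!$ admissible orderings, each producing a distinct nested commutator with no generic cancellation. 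This is not an artifact of sloppy counting: factorial growth of the per-site norms is exactly why the many-body Magnus (Floquet--Magnus) series is generically an asymptotic, divergent series, which is the basis of rigorous prethermalization results (optimal truncation at finite order, exponentially small rather than vanishing error). Consequently $\sum_{j>q}\|\Omega_j(t)\|$ diverges in the regime you target ($\alpha d t$ a small constant, $n$ large), and no choice of $c$ closes your geometric-tail argument. A repaired term-by-term argument needs the convergence condition $\alpha n d t\lesssim 1$ (which the paper also assumes when deriving its crude bound), plus a cutoff scheme: factorial-bearing cluster bounds for $q<j\lesssim 1/(\alpha d t)$ and the crude bound $(2\alpha n d t)^j$ beyond; this is substantially more delicate than what you sketch.

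The paper avoids the tail entirely, and the difference is worth internalizing: it never bounds $\Omega_j$ for $j>q$. Instead it studies $\partial_t(e^{-\overline\Omega}e^{\Omega})$, which it writes exactly in terms of nested commutators $\operatorname{adj}_{\overline\Omega}^{\,q-j+1}\Omega_j'$ of the \emph{finitely many retained} operators $\Omega_1,\dots,\Omega_q$, plus a polynomial $\mathcal P_q(\alpha)$ of degree at most $q$ in $\alpha$. The crude estimate (your first display, the paper's Eq.~\eqref{eq:prem-magnus-bound}) shows this derivative is $\O{\alpha^{q+1}}$, which forces $\mathcal P_q(\alpha)$ to vanish identically. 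The surviving, finitely many nested commutators are then shown to be concentrated $k$-local Hamiltonians with effective interaction degree $\O{\alpha^{q+1}d^{q+1}t^{q}}$ (\cref{cor:Omega-structure,lem:commutator-between-H-K}), whose norms are therefore $\O{n(\alpha d)^{q+1}t^{q}}$ by \cref{lem:norm-of-concentrated-H}; integrating over $[0,t]$ yields the theorem. Your Lieb-Robinson/cluster intuition is the right ingredient for bounding those finitely many commutators---it is essentially the machinery of \cref{app:concentrated-operator}---but it must be applied to this derivative identity, not to the tail of the Magnus series.
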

The theorem applies to a more general class of Hamiltonians $B$ and may be of independent interest beyond quantum simulation.  For the rest of the manuscript, we consider $B$ to be geometrically local. 
Our bound in Eq.~\eqref{eq:magnus-truncation-error} reduces the error dependence on the number of qubits from $\O{n^{q+1}}$ to $\O{n}$, a quadratic improvement even at~$q = 1$.
We present the detailed proof of \cref{thm:magnus-truncation-error} in \cref{app:magnus-truncation-error} and use the $q=1$ example to illustrate the idea.

If $\overline \Omega(t) = \Omega_1(t)$, the leading contribution to the truncation error comes from 
\begin{align}
    \norm{\Omega_2(t)} &\leq \frac{1}{2} \int_0^t d\tau \int_0^{\tau} ds\norm{\left[B_I(\tau), B_I(s)\right]}.\label{eq:example-omega-2}
\end{align}
The integrand can be bounded by~$\alpha^2\norm{B}^2 = \O{\alpha^2n^2}$, resulting in the bound in \cref{eq:prem-magnus-bound}.
Instead, if $B_I(\tau)$ and $B_I(s)$ \emph{were} both geometrically local, only Pauli terms in $B_I(\tau)$ of total norm $\O{1}$ may noncommute with a given term in $B_I(s)$.
Therefore, the commutator norm would scale as $\O{n}$, resulting in a bound of $\O{n(\alpha t)^2}$ for \cref{eq:example-omega-2}.

Although $B_I(\tau)$ and $B_I(s)$ are not strictly geometrically local due to the evolution under $A$, the geometric locality of $A$ and the Lieb-Robinson bound ensure that they inherit a similar structure. 
We develop a framework to capture the structure in \cref{app:concentrated-operator}, and show that the Magnus operators admit this structure in \cref{app:magnus-operator-structure}, and finally apply these results to bound the Magnus truncation error in \cref{app:magnus-truncation-error}.

Next, we use the Lieb-Robinson bound for the classical computation of $\overline \Omega(t)$. 
For $q = 1$, this operator can be expanded into a sum over $B_{X}$:
\begin{align}
    \overline \Omega(t) = \sum_{X} \int_0^t e^{iAs} B_{X} e^{-iAs} ds.
\end{align}
For each $X$, the evolved operator $e^{iAs} B_{X} e^{-iAs}$ is mostly supported within a light cone of radius $O(t) = O(1)$ around $X$.
In particular, let $A_{X,R}$ be the operator constructed from $A$ by dropping terms supported outside a radius $R$ from the set $X$.
The Lieb-Robinson bound implies~\cite{lieb_finite_2004,bravyi_lieb-robinson_2006}
\begin{align}
    \norm{e^{iAs} B_{X} e^{-iAs} - 
    e^{i A_{X,R}s} B_{X} e^{-i A_{X,R}s}
    } = \O{t e^{-R/\chi}},
\end{align}
where $\chi$ is the interaction range of $A$.
Therefore, the local version of the truncated Magnus operator
\begin{align}
    \overline \Omega_\loc (t) 
    \equiv \sum_{X}  \int_0^t e^{i A_{X,R}s} B_{X} e^{-i A_{X,R}s} ds
\end{align}
only differs from $\overline \Omega(t)$ by 
\begin{align}
    \norm{\overline \Omega(t) - \overline \Omega_\loc(t)} = \O{n t e^{-R/\chi }}. \label{eq:LR-error}
\end{align}
For a large time, we apply the construction in $r$ time steps, each of size $t/r$, resulting in the same bound. 
Choosing $R = \chi \log n t + \vartheta$ for some constant $\vartheta$ ensures that this error remains a small constant.
Note that $\overline \Omega_\loc(t)$ is a sum of terms that are supported on at most $\O{R} = \mathcal{O}(\log nt)$ qubits. Therefore, the complexity of computing $\overline \Omega_\loc(t)$ scales polynomially with~$n$ and $t$.
In \cref{app:magnus-operator-structure}, we generalize this construction to approximate $q$th-order Magnus operators by an operator $\overline \Omega_\loc(t)$ consisting of $n$ terms, each supported on at most $R + q$ qubits.
The error of this approximation is
\begin{align}
    \norm{\overline \Omega(t) - \overline \Omega_\loc(t)} = \O{n t e^{-R/8\chi }}, \label{eq:LR-error-general}
\end{align}
which is identical to \cref{eq:LR-error} except for a mildly slower decay.

Finally, we numerically find the Pauli decomposition $\overline \Omega_\loc(t)$ and approximate $e^{\overline \Omega_\loc(t)}$ using a $p$th-order product formula.
If $\alpha t = \O{1}$, the effective interaction degree of $\overline \Omega_\loc(t)$ is $\O{\alpha t}$ due to the first-order Magnus operator being the leading term (See~\cref{app:magnus-operator-structure}).
Therefore, the error of using the product formula is \cite{childs_theory_2021}
\begin{align}
    \norm{e^{\overline \Omega_\loc(t)}
    - V_{q,p}
    } = \O{n  (\alpha t)^{p+1}}. \label{eq:trotter-error}
\end{align}

Combining \cref{eq:magnus-truncation-error,eq:LR-error-general,eq:trotter-error}, the error of our algorithm for $t = O(1)$ is
\begin{align}
    \O{n(\alpha t)^{q+1}
    +nt e^{-R/8\chi}
    +n  (\alpha  t)^{p+1}}.
\end{align}
For large $t$, we repeat the approximation for $r$ time intervals, resulting in the total error
\begin{align}
    \O{\frac{n(\alpha  t)^{q+1}}{r^q}
    +nt e^{-R/8\chi}
    +\frac{n  (\alpha  t)^{p+1}}{r^p}}.
\end{align}
Given a constant error tolerance, we choose $R = 8\chi \log (nt) + \const$ and
\begin{align}
    r \propto \max\left\{
    n^{1/q} (\alpha  t)^{1+1/q},
    n^{1/p} (\alpha  t)^{1+1/p}
    \right\}.
\end{align}
At this point, it is obvious that the errors of the Magnus truncation and the product formula in our algorithm are comparable given the same approximation orders. 
To simplify the expression, we choose $p = q$ in the rest of the analysis.

To estimate the gate count, $\overline \Omega_\loc(t)$ contains $n$ terms, each supported on at most $R + q$ qubits. 
Since $R$ scales logarithmically with $nt$, the number of Pauli terms scales at worst polynomially with $n t$, i.e., there exists a constant $\gamma$ such that the number of Pauli terms is $\mathcal O(n (nt)^\gamma)$.
Here, $\gamma\leq 8\chi \log 4$ depends on the details of the lattice $\Lambda$ and the locality of the Hamiltonian $A$. 
Therefore, the total gate count for simulating $r$ instances of $V_{q,p}(t/r)$ is 
\begin{align}
    \mathcal G_{\text{Mag},p}^B
    =\Otilde {
    (nt)^\gamma (\alpha n t)^{1 + 1/p}
    }.\label{eq:gate-count-magnus-B}
\end{align}
Compared to that of the product formula [\cref{eq:gate-count-trotter-p}], the gate count of our algorithm scales nearly linearly with $\alpha$.
We summarize the result and compare to the state of the art in \cref{tab:gate-count}.

If $A$ is not fast-forwardable, we also have to simulate $r$ instances of $e^{-iAt/r}$ in addition to $V_{p,p}(t/r)$.
In the above analysis, we have not accounted for the error of simulating these evolutions.
To control such error, we further divide each interval of length $t/r$ into $r'$ equal subintervals and use the same $p$th-order product formula to approximate $e^{-iAt/(rr')}$.
The total error after $rr'$ intervals is
    $\O{n t^{p+1}/{(rr')^p}}$,
informing the choice of
\begin{align}
    r' \propto \frac{n^{1/p} t^{1+1/p}}{r} \propto \frac{1}{ \alpha ^{1+1/p}}.
\end{align}

The total gate count for simulating $r$ instances of $e^{-iAt/r}$ is given by 
\begin{align}
    \mathcal G_{\text{Mag},p}^A
    =
    \Otilde{\! (nt)^{1+1/p}\! }.\label{eq:magnus-gate-count-A}
\end{align}
Adding \cref{eq:magnus-gate-count-A} to \cref{eq:gate-count-magnus-B}, we arrive at the total gate count of our algorithm, which is asymptotically identical to that of the product formulas.
The advantage of using our algorithm is overshadowed by the cost of simulating $A$ in the asymptotic limit, but the effective reduction in the circuit depth can be significant for near-term simulations.

\begin{table}[t]
    \centering
    \begin{tabular}{c c c}
    \toprule
    Algorithm & Gate count \\\hline
    Product formulas~\cite{childs_theory_2021} & $\mathcal O ({(nt)^{1+o(1)}})$
    \\
    qDRIFT~\cite{campbellRandomCompilerFast2019new}     
    &  $\O{ \alpha^2 n^2 t^2 }$
    \\\hline
    Dyson series~\cite{low_hamiltonian_2019}     
    &  $\Otilde{ \alpha n^2  t}$
    \\
    qHOP~\cite{an_time-dependent_2022}     
    &  $\Otilde{ \alpha n^3 t^2 }$
    \\
    Cont. qDRIFT~\cite{berry_time-dependent_2020} 
    & $\Otilde{\alpha^2 n^2t^2}$ 
    \\\hline
    This paper
    & $\Otilde {
        (nt)^\gamma (\alpha n t)^{1 + o(1)}
        }$
    \\
    \botrule
    \end{tabular}
    \caption{The gate count of several algorithms for simulating the evolution $e^{-iHt}$ generated by the Hamiltonian $H = A + \alpha B$, as considered in this paper. Except for the $p$th-order product formulas and qDRIFT, these algorithms are applied to the evolution $U_{BI}(t)$ in the interaction frame. It is assumed that the cost of implementing $e^{-i A t}$ is negligible for all $t$. In the scenario where $\gamma$ approaches 0, our algorithm's gate count scales as $\Otilde{\alpha n t}$, thereby outperforming the other algorithms.
    }
    \label{tab:gate-count}
\end{table}

To benchmark our algorithm on near-term simulations, we consider the disordered XY model $H = A + \alpha B$ with $A = \sum_{i=1}^{n} Z_i$ and $B = \sum_{i=1}^{n-1} r_i Y_i Y_{i+1}+ s_i X_i X_{i+1} - \sum_{i=1}^n u_i X_i $ for $n = 12$ qubits. Here, $r_i, s_i$, and $u_i$ are sampled uniformly at random in the range $(-1, 1)$ for all $i$. 
In~\cref{fig:error_plots}, we plot the approximation error in simulating $e^{-i Ht}$ for a total time $t =3$ at different $\alpha$ using our algorithm, the first-order, and the second-order product formulas.  
To ensure the three algorithms have the same gate count, we use time intervals of different sizes (1, 0.5, and 0.75 respectively) for different algorithms.

\Cref{fig:error_plots} shows that for small values of $\alpha$, our algorithm performs significantly (one-order magnitude) better than the first- and the second-order product formulas. The dependence on $\alpha$ of the algorithms also matches the theoretical prediction. In particular, errors due to the first- and second-order product formulas scale linearly in $\alpha$, whereas the error of our algorithm scales quadratically. 

\begin{figure}[b]
    \centering
    \includegraphics[width=0.45\textwidth]{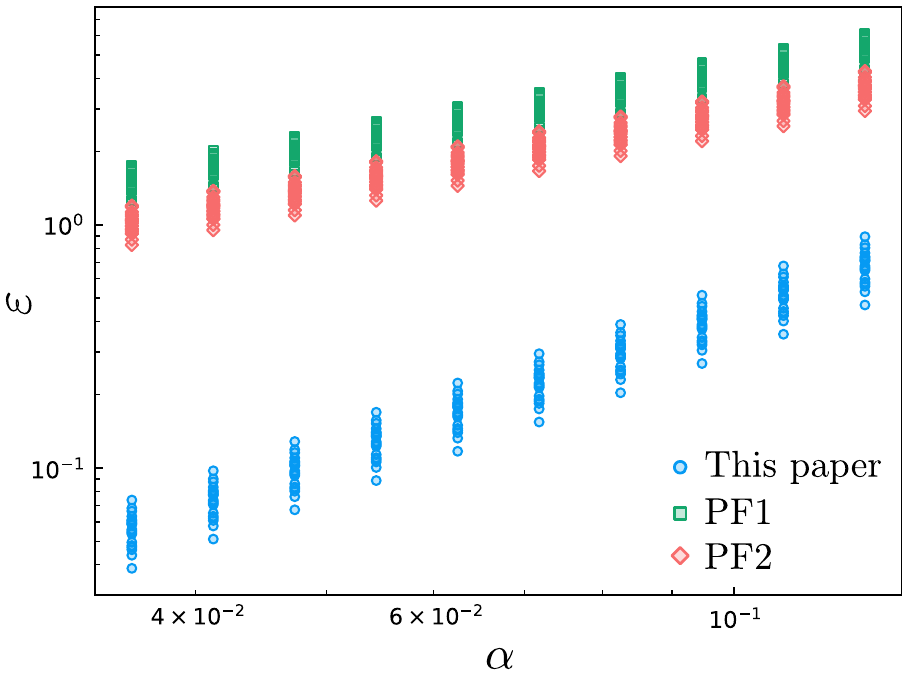}
    \caption{Comparison of the errors from our algorithm (circles), first- (``PF1", squares) and second-order (``PF2", diamonds) product formulas for simulating the disordered XY model of $n = 12$ qubits for time $t = 3$ at different values of $\alpha$. 
    Points of the same color represent different random coefficients of the Hamiltonian.
    The errors for the first- and second-order product formulas scale linearly with $\alpha$, whereas the error of our algorithm scales quadratically.}
    \label{fig:error_plots}
\end{figure}

\section{Discussion}

We have introduced an algorithm for simulating the time evolution in the interaction frame of a geometrically local Hamiltonian.
The algorithm is ancilla-free and takes advantage of the locality of the Hamiltonian, particularly via a tightened bound on the error of truncating the Magnus expansion.

In the limit~$\chi\!\rightarrow 0$, $A$ contains only terms supported on single qubits. The evolution $e^{-iAt}$ is fast-forwardable and the resulting lightcone does not grow with time~\!($\gamma\!\!\!\rightarrow \!0$).
The complexity of our algorithm reduces to $\Otilde{(\alpha n t)^{1+o(1)}}$\!, which is optimal~\cite{haah2021quantum} and outperforms both~$\Otilde{\alpha^{o(1)}(nt)^{1+o(1)}}$\! of product formulas and~$\Otilde{\alpha n^2 t}$\! of algorithms based on the Dyson series.

In general, the complexity of our algorithm depends on the constant $\gamma$, which can be as large as $4\chi\log 4$.
This value assumes that $\overline \Omega_\text{loc}(t)$ contains all exponentially many possible Pauli strings supported inside the lightcone, preventing applications of our algorithm to more general $B$, e.g. $k$-local Hamiltonians.
Practically, $\gamma$ depends on the Hamiltonian of interest and can be much smaller than this upper bound, especially in the presence of symmetries.

For example, consider $A\propto \sum_i X_iX_{i+1} + Y_iY_{i+1}$ and $B$ contains Pauli strings of the form $\sigma_i Z_{i+1} \dots Z_{j-1} \sigma_j$, where $\sigma_i,\sigma_j$ can be either $X,$ or $ Y$ supported on qubits $i,j$.
The evolution of $B$ under $A$ preserves this structure, limiting the number of possible Pauli strings in $\overline{\Omega}_\text{loc}$ to scale polynomially with the size of the lightcone, which is logarithmic in $n$ and $t$. 
In this case, $\gamma$ can be chosen arbitrarily close to 0.
Such $A$ and $B$ are the products of applying the Jordan-Wigner transformation on a nearest-neighbor and a two-local fermionic Hamiltonian, respectively. 
The argument for $\gamma\approx 0$ can be similarly generalized to $k$-local fermionic Hamiltonians $B$.

If $\Lambda$ is a general $D$-dimensional lattice, the number of Pauli strings in the decomposition of the Magnus operators can be exponential in $R^D$, which is no longer polynomial in $n,t$ for $D> 1$. This number again corresponds to the worst case and can be significantly smaller, e.g. under the symmetry constraints discussed earlier. For such $D$-dimensional systems, our algorithm still applies.

While we assume $A$ and $B$ are time-independent in our analysis,
our algorithm can be immediately generalized to simulate time-dependent Hamiltonians $A(t)$ and $B(t)$.
In the special case $A(t) = 0$, our algorithm can be applied to simulate a time-dependent geometrically local Hamiltonian $B(t)$.
Since the Magnus operators are computed classically, the gate count of our algorithm is independent of how fast $B(t)$ changes with time---a feature similar to continuous qDRIFT~\cite{berry_time-dependent_2020}.
For reference, the complexity of product formulas, qHOP~\cite{an_time-dependent_2022}, and algorithms based on the Dyson series~\cite{low_hamiltonian_2019} would scale either polynomially or logarithmically with the derivative $\norm{dB/dt}$.

{\ }

\textit{Note added.} While finalizing our paper, we noticed two independent papers on arXiv studying the Hamiltonian simulation using the Magnus expansion~\cite{bosse_efficient_2024, casares_quantum_2024}.

\begin{acknowledgments}
    We acknowledge helpful discussions with Sergey Bravyi, William Kirby, Antonio Mezzacapo, Oles Shtanko, and Kristan Temme. 
\end{acknowledgments}

\onecolumngrid

\appendix

\vspace{0.5in}

\setcounter{theorem}{0}

\begin{center}
	{\large \textbf{Appendix for ``Hamiltonian Simulation in the Interaction Picture using the Magnus Expansion''} }
\end{center}

\section{Preliminaries}\label{app:preliminaries}

In this section, we recall definition and prior results relevant for our paper. We first define the notion of Hamiltonian simulation in the interaction picture. We then summarize properties and relevant formulas for the Magnus expansion. For completeness, we also summarize Duhamel's principle for Hamiltonian simulation and commutator-dependent bounds for first-order and second-order Trotter simulation. We point readers to \cite{lin2022lecture} for more details on Hamiltonian simulation methods. 

Let 
\begin{align}\label{eq:ham}
 H = A + \alpha B   
\end{align}
where $A$ is geometrically local Hamiltonians and $B$ is a $k$-local Hamiltonian with an effective interaction degree $d$. Let $B = \sum_{X \subset \Lambda} B_X$ be a decomposition into a sum of $L$ terms, each supported on a subset $X$ of at most $k$ qubits.
We call 
\begin{align}
d\equiv \max_{i \in \Lambda} \sum_{X: i \in X} \norm{B_X}    
\end{align} 
the effective interaction degree of $B$, which upper bounds the norms of the terms supported on a qubit. 
For example, $d = \mathcal O{(1)}$ if B is also geometrically local and $d = \mathcal O{(n^{k-1})}$ in general.

We recall the definition of geometrical locality as defined in \cref{sec:setup}. In particular,
a Hamiltonian $H$ for $n$ qubits defined on a $1$-dimensional regular lattice is geometrically local if there is a decomposition $H = \sum_{i=1}^n H_i$ such that $H_i$ is supported entirely on a neighborhood of constant radius around $i$. In our analysis, we assume that $\alpha <1$.

In the interaction frame of $A$, the time evolution under $H$ can be expressed as 
\begin{align}
	e^{-i H t} = e^{-i A t}  \mathcal{T}\left[e^{ \int_0^t B_I(\tau) \mathrm{d} \tau}\right]~, 
\end{align}
where $ \mathcal{T}\left[e^{ \int_0^t B_I(\tau) \mathrm{d} \tau}\right]$ is the time-ordered expeonential with 
\begin{align}\label{eq:time-evolved-B}
	B_I(\tau) = - i \alpha e^{i A\tau } B e^{-i A \tau},
\end{align}
with $\norm{B_I(\tau)} = \norm{B} \leq nd$ which follows trivially from the definition of $d$ and a sum over the qubit index. 

Before providing an approximation of  $\mathcal{T}\left[e^{\int_0^t B_I(\tau) \mathrm{d} \tau}\right]$, we summarize important properties of the Magnus expansion. The Magnus expansion is defined as follows.

\begin{lemma}[Magnus expansion \cite{moan2008convergence,blanes2009magnus,arnal2018general}]\label{lemma:magnus-expansion}
Let $\mathcal{A}(\tau)$ be a continuous operator-valued function defined for $0\leq \tau\leq t $ such that $\Vert \mathcal{A}(\tau) \Vert \leq 1$. Then, we get that 
\begin{align}
	\mathcal{T} \exp \left\{\int_0^t \mathrm{~d} \tau \mathcal{A}(\tau)\right\}=\exp \left(\sum_{j=1}^{\infty} \Omega_j(t)\right)\equiv \exp(\Omega(t))~,
\end{align}
where 
\begin{align}
	&\Omega_j(t)=\frac{1}{j^2} \sum_\sigma(-1)^{d_b} \frac{1}{\left(\begin{array}{c}
j-1 \\
d_b
\end{array}\right)} \int_0^t d \tau_1 \ldots \int_0^{d \tau_{j-1}} d \tau_j\left[\mathcal{A}\left(\tau_1\right), \ldots\left[\mathcal{A}\left(\tau_{j-1}\right), \mathcal{A}\left(\tau_j\right)\right] \ldots\right], 
\end{align} 
the sum is being taken over all permutations $\sigma$ of $\{1, \dots, j\}$, and $d_b$ is the number of descents, i.e., pairs of consecutive numbers $\sigma_k, \sigma_{k+1}$ for $k =\{1, \dots, j-1\}$ such that $\sigma_k > \sigma_{k+1}$ in the permutation. 
\end{lemma}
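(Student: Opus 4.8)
The plan is to prove the identity dynamically, by verifying that $\exp(\Omega(t))$ solves the same operator initial-value problem as the time-ordered exponential, and then to extract the closed-form descent expression by solving a recursion. First I would record that $U(t)\equiv\mathcal{T}\exp\{\int_0^t \mathcal{A}(\tau)\,d\tau\}$ is the unique solution of $\dot U(t)=\mathcal{A}(t)U(t)$ with $U(0)=\unit$. Since $\Vert\mathcal{A}\Vert\le 1$, the Dyson series for $U$ converges and $U(t)$ is invertible; on the domain where $\int_0^t\Vert\mathcal{A}(\tau)\Vert\,d\tau$ lies below the Magnus radius of convergence (of size $\pi$, see \cite{moan2008convergence}), the logarithm $\Omega(t)\equiv\log U(t)$ is well defined and analytic, which legitimizes all the term-by-term manipulations that follow.

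Next I would differentiate $U(t)=e^{\Omega(t)}$ using the derivative of the exponential map, $\dot U=\left(\mathrm{dexp}_{\Omega}\dot\Omega\right)e^{\Omega}$, where $\mathrm{dexp}_{\Omega}=\sum_{m\ge 0}\frac{1}{(m+1)!}\,\mathrm{ad}_\Omega^{\,m}$ and $\mathrm{ad}_\Omega(X)=[\Omega,X]$. Comparing with the ODE gives $\mathcal{A}(t)=\mathrm{dexp}_{\Omega}\dot\Omega$, and since the small norm of $\Omega$ keeps $\mathrm{ad}_\Omega$ away from the singularities of $\mathrm{dexp}$, I would invert this to obtain the Magnus differential equation $\dot\Omega=\frac{\mathrm{ad}_\Omega}{e^{\mathrm{ad}_\Omega}-1}\mathcal{A}(t)=\sum_{m\ge 0}\frac{\beta_m}{m!}\,\mathrm{ad}_\Omega^{\,m}\mathcal{A}(t)$, with $\beta_m$ the Bernoulli numbers. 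Introducing a grading $\mathcal{A}\mapsto\lambda\mathcal{A}$ and writing $\Omega=\sum_{j\ge 1}\Omega_j$ with $\Omega_j$ homogeneous of degree $j$, I would match powers of $\lambda$ to get $\dot\Omega_1=\mathcal{A}$ and $\dot\Omega_n=\sum_{m=1}^{n-1}\frac{\beta_m}{m!}\sum_{j_1+\dots+j_m=n-1}\mathrm{ad}_{\Omega_{j_1}}\cdots\mathrm{ad}_{\Omega_{j_m}}\mathcal{A}(t)$, which together with $\Omega_j(0)=0$ determines every $\Omega_j$ by integration.

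The remaining and most delicate step is to show that integrating this recursion reproduces the stated closed form, in which $\Omega_j$ is a sum over $\sigma\in S_j$ of right-nested commutators of $\mathcal{A}$ on the descending simplex $t\ge\tau_1\ge\cdots\ge\tau_j\ge 0$, weighted by $(-1)^{d_\sigma}/\!\left(j^2\binom{j-1}{d_\sigma}\right)$ with $d_\sigma$ the number of descents. I would prove this by induction on $j$: substituting the inductive descent form of $\Omega_{j_1},\dots,\Omega_{j_m}$ into the recursion expands $\mathrm{ad}_{\Omega_{j_1}}\cdots\mathrm{ad}_{\Omega_{j_m}}\mathcal{A}$ into nested commutators of $\mathcal{A}$ at interlaced times, and reorganizing the iterated integrals back onto the standard descending simplex reindexes the contributions by permutations of $\{1,\dots,j\}$. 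The crux is a symmetric-group identity showing that the accumulated weights — products of the Bernoulli factors $\beta_m/m!$ with the inductive descent coefficients — telescope into the single statistic $(-1)^{d_\sigma}/\!\left(j^2\binom{j-1}{d_\sigma}\right)$. I expect this resummation to be the main obstacle, because it encodes the nontrivial passage from the generating function $x/(e^x-1)$ to Eulerian (descent) statistics on $S_j$.

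As an alternative that targets the descent weights more directly, I would instead expand $U(t)$ as its Dyson series, take the logarithm on the shuffle algebra of iterated integrals, and apply a Dynkin-type projection onto the right-nested-commutator basis; the coefficient carried by each permutation is then exactly the descent weight above, equivalently the evaluation of the first Eulerian idempotent of the descent algebra. Either route reduces the lemma to the same combinatorial identity, so I would present the differential-equation derivation as the backbone and invoke the descent-algebra identity (as established in \cite{arnal2018general}) to close the final step, with the bound $\Vert\mathcal{A}\Vert\le 1$ guaranteeing convergence throughout.
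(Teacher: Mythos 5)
The paper never proves this lemma: it is quoted as background in \cref{app:preliminaries}, with the proof entirely delegated to the cited references \cite{moan2008convergence,blanes2009magnus,arnal2018general}. The closest thing to a proof in the paper is the \emph{second} preliminary lemma (from \cite{magnus1954exponential}), which records exactly the backbone you derive: the Magnus differential equation $\dot\Omega=\sum_{m\geq 0}\frac{B_m}{m!}\,\mathrm{ad}_\Omega^m\,\mathcal{A}(t)$ obtained by inverting $\mathrm{dexp}_\Omega$, together with the homogeneous recursion $\Omega_j' = \sum_{k=1}^{j-1}\frac{B_k}{k!}S_j^{(k)}$. So your differential-equation derivation is not a different route from the paper --- it is a fleshed-out version of the representation the paper states without proof, and it is correct as far as it goes: uniqueness of the solution of $\dot U=\mathcal{A}U$, invertibility of $U$, well-definedness of $\log U$ inside the Magnus convergence domain, and the graded recursion are all standard and sound. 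The genuinely hard step, which you correctly isolate, is the combinatorial resummation showing that integrating the Bernoulli-weighted nested-commutator recursion collapses, permutation by permutation, to the single descent weight $(-1)^{d_\sigma}\big/\bigl(j^2\binom{j-1}{d_\sigma}\bigr)$; you do not carry this out but invoke the descent-algebra/Eulerian-idempotent identity of \cite{arnal2018general}. Since that is precisely one of the references the paper itself cites for the lemma, your proposal matches the paper's level of rigor; as a fully self-contained proof it leaves the crux to the literature, and an honest write-up should flag that the induction-plus-reindexing step (interlacing the simplices and telescoping the weights) is where all the work lives.

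One point where you are actually more careful than the statement itself: the hypothesis $\Vert\mathcal{A}(\tau)\Vert\leq 1$ alone does not guarantee convergence of $\sum_{j}\Omega_j(t)$ for all $t$; one needs $\int_0^t\Vert\mathcal{A}(\tau)\Vert\,d\tau<\pi$ (the Moan--Niesen radius you cite), and your restriction to that domain is the right fix. The paper implicitly operates in this regime anyway, since its applications assume $\alpha\norm{B}t\leq 1$.
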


From Lemma~\ref{lemma:magnus-expansion}, the first two $\Omega_j(t)$ is given by: 
\begin{align}
	 \Omega_1(t) &=  \int_0^t \mathrm{~d}\tau \mathcal{A}(\tau)~,\\
	\Omega_2(t) &= \frac{1}{2} \int_0^t \mathrm{~d} \tau_1 \int_0^{\tau_1} \mathrm{~d} \tau_2\left[\mathcal{A}(\tau_1), \mathcal{A}(\tau_2)\right]~.
\end{align}
For the problem of interest in this paper, $\mathcal{A}(\tau)$ gets replaced by $B_I(\tau)$ as defined in Eq.~\eqref{eq:time-evolved-B} for the Hamiltonian considered in Eq.\eqref{eq:ham}. 

\medskip 

We now recall an alternate representation for the Magnus expansion which we employ in our proofs. In particular, we will invoke the following lemma in Section~\ref{app:magnus-truncation-error}. 
\begin{lemma}[Magnus expansion \cite{magnus1954exponential}]
	Let $ Y^{\prime}(t) = A(t)Y(t)$ with $Y(0)=I$. Then $Y(t)$ can be written as follows:
	\begin{align}
		Y(t) = \exp(\Omega(t))~,
	\end{align}
	where
	\begin{align}
	\frac{d \Omega}{d t}=\sum_{n=0}^{\infty} \frac{B_n}{n !} \operatorname{ad}_{\Omega}^n ~,
	\end{align}	
	with $B_n$ denoting the Bernoulli numbers.  Then the integration over $d\Omega/dt$ leads to 
	\begin{align}
		\Omega(t)=\int_0^t A\left(t_1\right) d t_1-\frac{1}{2} \int_0^t\left[\int_0^{t_1} A\left(t_2\right) d t_2, A\left(t_1\right)\right] d t_1+\cdots~.
	\end{align}
	Moreover, derivatives of $\Omega_j$ have the following form: 
	\begin{align}
		\Omega_1^{\prime} &= A~,\\
		\Omega_2^{\prime}& =-\frac{1}{2}[\Omega_1, A]~,\\
		\Omega_3^{\prime}& = -\frac{1}{2}[\Omega_2, A]+\frac{1}{12}[\Omega_1,[\Omega_1, A]]~,\\
		\Omega_4^{\prime}& = -\frac{1}{2}[\Omega_3, A]+\frac{1}{12}[\Omega_2,[\Omega_1, A]]+\frac{1}{12}[\Omega_1,[\Omega_2, A]]~,\\
		\Omega_j^{\prime} & = \sum_{k=1}^{j-1} \frac{B_k}{k!} S_j^{(k)}, \quad j\geq 2~,
	\end{align}
	where
	\begin{align}
		S_j^{(k) }= \sum \left[ [\Omega_1, [\dots[\Omega_{i_k}, A]\dots]  \right] \qquad (i_1 + \dots + i_k = j-1)~.
	\end{align}
\end{lemma}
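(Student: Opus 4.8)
The plan is to derive the differential equation obeyed by $\Omega(t)$ directly from the defining linear ODE $Y'(t)=A(t)Y(t)$, $Y(0)=I$, together with the ansatz $Y(t)=e^{\Omega(t)}$, and then to extract the order-by-order recursion for the $\Omega_j$ by grading the expansion in powers of $A$. First I would invoke the standard formula for the derivative of the exponential map: differentiating $Y=e^{\Omega}$ gives $\frac{d}{dt}e^{\Omega}=\big(\sum_{k=0}^{\infty}\frac{1}{(k+1)!}\operatorname{ad}_{\Omega}^{k}\dot\Omega\big)e^{\Omega}$, which in closed form reads $\frac{d}{dt}e^{\Omega}=\big(\frac{e^{\operatorname{ad}_{\Omega}}-1}{\operatorname{ad}_{\Omega}}\dot\Omega\big)e^{\Omega}$. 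Matching this with $Y'=Ae^{\Omega}$ and right-multiplying by $e^{-\Omega}$ yields $A=\frac{e^{\operatorname{ad}_{\Omega}}-1}{\operatorname{ad}_{\Omega}}\dot\Omega$.

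Next I would invert the map $x\mapsto(e^{x}-1)/x$ at the level of formal power series in the superoperator $\operatorname{ad}_{\Omega}$. Its inverse is generated by $\frac{x}{e^{x}-1}=\sum_{n=0}^{\infty}\frac{B_n}{n!}x^{n}$, so that $\dot\Omega=\sum_{n=0}^{\infty}\frac{B_n}{n!}\operatorname{ad}_{\Omega}^{n}(A)$, which is exactly the claimed differential equation. Using $B_0=1$ and $B_1=-\tfrac12$, a single integration in $t$ reproduces the first two terms $\Omega(t)=\int_0^t A(t_1)\,dt_1-\tfrac12\int_0^t[\int_0^{t_1}A(t_2)\,dt_2,A(t_1)]\,dt_1+\cdots$.

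To obtain the recursion for $\Omega_j'$, I would grade the series by the number of factors of $A$: writing $\Omega=\sum_{j\ge1}\Omega_j$ with $\Omega_j$ of order $j$, each factor $\operatorname{ad}_{\Omega_i}$ raises the order by $i\ge1$ while the innermost $A$ contributes order $1$, so $\operatorname{ad}_{\Omega_{i_1}}\!\cdots\operatorname{ad}_{\Omega_{i_k}}(A)$ has total order $i_1+\cdots+i_k+1$. Collecting all contributions of order $j$ in $\dot\Omega=\sum_n\frac{B_n}{n!}\operatorname{ad}_{\Omega}^{n}(A)$ then gives the $n=0$ term $\Omega_1'=A$ and, for $j\ge2$, $\Omega_j'=\sum_{k=1}^{j-1}\frac{B_k}{k!}\sum_{i_1+\cdots+i_k=j-1}[\Omega_{i_1},[\cdots[\Omega_{i_k},A]\cdots]]=\sum_{k=1}^{j-1}\frac{B_k}{k!}S_j^{(k)}$, matching the stated definition of $S_j^{(k)}$. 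Substituting $B_1=-\tfrac12$, $B_2=\tfrac16$, and $B_3=0$ then verifies the displayed formulas for $\Omega_2'$, $\Omega_3'$, and $\Omega_4'$.

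The main obstacle is rigor rather than ingenuity. I must justify the derivative-of-exponential formula and, more delicately, the formal inversion of $(e^{x}-1)/x$ — i.e.\ that the Bernoulli series is genuinely its multiplicative inverse in the algebra of formal power series in the (possibly unbounded) superoperator $\operatorname{ad}_{\Omega}$ — together with convergence of the resulting series under the hypothesis $\norm{A(\tau)}\le1$ for sufficiently small $t$. The remaining combinatorial bookkeeping, matching the compositions $i_1+\cdots+i_k=j-1$ to the definition of $S_j^{(k)}$, is routine.
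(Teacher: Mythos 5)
Your derivation is correct, and it is exactly the standard argument: apply the derivative-of-the-exponential-map formula to get $A=\frac{e^{\operatorname{ad}_{\Omega}}-1}{\operatorname{ad}_{\Omega}}\dot\Omega$, invert via the Bernoulli generating function $\frac{x}{e^{x}-1}=\sum_{n\ge 0}\frac{B_n}{n!}x^n$, and grade by powers of $A$ to obtain $\Omega_1'=A$ and $\Omega_j'=\sum_{k=1}^{j-1}\frac{B_k}{k!}S_j^{(k)}$, with $B_3=0$ accounting for the absence of a triple-nested term in $\Omega_4'$. Note that the paper itself gives no proof of this lemma --- it is recalled verbatim from the cited literature (Magnus 1954; see also the Blanes et al.\ review), where the proof is precisely the one you reconstruct, so there is no alternative route to compare against. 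Two small remarks: your writing $\Omega_{i_1}$ in $S_j^{(k)}=\sum_{i_1+\cdots+i_k=j-1}[\Omega_{i_1},[\cdots[\Omega_{i_k},A]\cdots]]$ silently corrects what appears to be a typo in the paper's statement (which has $\Omega_1$ in the outermost slot); and the convergence issue you flag is real but harmless here, since in the paper's application the hypothesis $\norm{B_I(\tau)}$ small and $\alpha\norm{B}t\le 1$ (cf.\ Lemma~1) places the expansion within its known radius of convergence, so the formal power-series manipulations are legitimate.
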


\bigskip 

\noindent \textbf{Duhalmel's principle}. We recall Duhamel's principle for Hamiltonian simulation which is useful in the error analysis. Let $U(t), V(t) \in \mathbb{C}^{d\times d}$. Let $H\in \mathbb{C}^{d\times d}$ denote a Hermitian operator and $E(t) \in \mathbb{C}^{d\times d}$ denote an arbitrary matrix. If 
\begin{align}
    i \partial_t U(t) &= H U (t)~, \\
    i \partial_t V(t) & = H V(t)+E(t)~,\\
    U(0) & = V(0) = I~,
\end{align}
then 
\begin{align}
    V(t) = U(t) - i \int_0^t U(t-s) E(s) ds,
\end{align}
which implies that 
\begin{align}\label{eq:duhamel}
\Vert U(t)- V(t)\Vert \leq \int_0^t ds \Vert E(s)\Vert~.
\end{align}
Here, we used the data-processing inequality and the triangle inequality.

\bigskip 

\noindent \textbf{Product formulas and Trotter error}. Using the Duhamel's principle stated above, one can derive Trotter error bounds that depends on the nested commutators between the terms of a Hamiltonian $H$. In particular, let $H = \sum_{i=1}^L H_i$. Let $V_p$ denote the $p$-th order product formula, which is defined recursively as follows: 
\begin{align}
    V_1(t) &= \prod_{j=1}^L \exp\left(-i H_j t\right)~,\\
    V_2(t) &= \prod_{j=1}^L \exp\left(-i H_j/2\right) \prod_{j=L}^1 \exp\left(-i H_j/2\right)~,\\
    V_p(t) &= V_{2p-2}(m_p t)^2 V_{2p -2}( (1- 4m_p) t)V_{2p -2}(m_p t)^2~,
\end{align}
where $m_p = \frac{1}{4 - 4^{1/(2p-1)}}$.

The number of $r$ Trotter steps for a $p$-th order product formula implies $(V_p(t/r))^r$, i.e., one implements $V_p(t/r)$ for time $t/r$ and repeats that $r$ times.  

 Let $V_1$ and $V_2$ denote the first-order and second-order product formula, respectively, and let $U = e^{-i H t}$. Then we get that \cite{childs2022quantum}
 \begin{align}
     \Vert U - V_1 \Vert  &\leq \frac{t^2}{2} \sum_{k = 1}^L \left\Vert \left[\sum_{j=k+1}^L H_j, H_k \right]\right\Vert~ \\
     \Vert U - V_2 \Vert & \leq \frac{t^3}{12}\sum_{i=1}^L \left\Vert \left[\sum_{k = i+1}^L H_k,\left[\sum_{j=i+1}^L H_j, H_i\right]\right]\right\Vert + \frac{t^3}{24} \sum_{i = 1}^L \left\Vert \left[H_i, \left[H_i, \sum_{j=i+1}^L H_j \right] \right] \right\Vert 
 \end{align}

For the case when $H = A + \alpha B$, then we have
\begin{align}
    \Vert U - V_1 \Vert  &\leq \frac{ \alpha t^2}{2} \Vert [A,B]\Vert~,\\
    \Vert U - V_2 \Vert & \leq \alpha\left(\frac{ t^3}{12} \Vert [A [A, B] ]  \Vert + \frac{\alpha t^3}{24} \Vert [B,[B,A]] \Vert\right)~.
\end{align}

\bigskip 

\noindent \textbf{Lieb-Robinson bound}. Let $O_X$ and $O_Y$ operators on $n$ qubits supported on $X$ and $Y$ sets of qubits, and $H$ denote a geometrically local Hamiltonian with an interaction range $\chi$. Then the following inequality holds~\cite{lieb_finite_2004}: 
\begin{align}
    \| [e^{i H t} O_Y e^{-i Ht}, O_X] \| \leq c  \min\{|X|, |Y|\}  \| O_X\| \| O_Y \| \left(e^{v\abs{t}}-1\right) \exp\left(-\frac{R}{\chi}\right) \label{eq:lb-bound},
\end{align}
where $R$ is the minimum distance between $X$ and $Y$, i.e., $\min_{i\in X, j\in Y}  d_{ij}$, and $c$ and $v$ are constants that depend on $\chi$ and the effective interaction degree of $H$ which is also a constant. 

In the next section, we derive a naive bound on the truncated Magnus approximation. 

\section{Proof of Eq.~\eqref{eq:prem-magnus-bound}}
In this section we establish a bound on the truncation error in the Magnus expansion. In particular, we highlight why exploiting the commutator structure is necessary to establish a tight bound. As a first step,  we approximate $\Omega$ with $\overline{\Omega} =\sum_{k=1}^q \Omega_k $.
Note that in our problem, $\Omega_k$ are anti-Hermitian. For simplicity, we do not write the time dependence of $\Omega$ explicitly. 

Since $\norm{\Omega_k} \leq 2^k \alpha^k \norm{B}^k t^k \leq (2n\alpha d t)^k$, we have 
\begin{align}
 \Vert e^{\Omega} - e^{\overline{\Omega}}\Vert 
	& \leq \Vert \Omega - \overline{\Omega}\Vert 
	 = \sum_{k=q+1}^{\infty} \Vert \Omega_k\Vert
     \leq \sum_{k=q+1}^{\infty} (2n\alpha d t)^k
     \leq 2 (2n\alpha d t)^{q+1},\label{eq:naive-bound-on-Magnus-error}
\end{align}
assuming $n\alpha dt \leq 1/4$.
The first inequality follows from the Duhamel's principle as discussed in Eq.~\eqref{eq:duhamel}. We recall it for completeness. Let $\tau$ denote a dummy variable. Note that $\partial_{\tau}(e^{\Omega \tau}) = \Omega e^{\Omega \tau}$ and $\partial_{\tau}e^{\overline{\Omega}\tau} = \Omega e^{\overline{\Omega} \tau} + (\overline{\Omega}- \Omega)e^{\overline{\Omega}\tau} $. Then we get 
\begin{align}
	e^{\overline{\Omega} \tau} &= e^{\Omega \tau} +  \int_0^{1} e^{\Omega (\tau -s)}  (\overline{\Omega}-\Omega) e^{\overline{\Omega}\tau} d\tau \label{eq:duhamel-app1}
\end{align}
which implies that 
\begin{align}
	\Vert e^{\Omega } - e^{\overline{\Omega}  }\Vert &= \left\Vert \int_0^{1} d\tau  e^{\Omega (\tau -s)} (\Omega - \overline{\Omega}) e^{\overline{\Omega}\tau}  \right\Vert 
	\leq \int_0^{1} d\tau \left\Vert \Omega - \overline{\Omega}\right\Vert 
	  = \Vert \Omega - \overline{\Omega}\Vert~.\label{eq:duhamel-app2}
\end{align}

We conclude from \cref{eq:naive-bound-on-Magnus-error} that the simulation error is of order $\alpha^{q+1}$. However, the $n$ dependence makes the bound loose as it grows with $n^{q+1}$. In the next sections, we show how to improve the bound to a linear dependence on $n$ by exploiting the locality $B_I(\tau)$.

\section{Magnus truncation error}\label{app:magnus-truncation-error}
In this section, we prove \cref{thm:magnus-truncation-error} by analyzing the simulation error that takes account of the commutator between different $\Omega_i$. 
First, we note that
\begin{align} 
    \norm{e^{-\overline \Omega}e^{\Omega} - I}     
    = \norm{\int_0^1 e^{-\overline \Omega \tau}(\Omega- \overline\Omega)e^{\Omega\tau} d\tau}
    \leq \int_0^1 \norm{\Omega- \overline\Omega} d\tau
    \leq 2 (2n\alpha d t)^{q+1},\label{eq:order-condition}
\end{align}
where the last inequality follows from \cref{eq:naive-bound-on-Magnus-error}.
\Cref{eq:order-condition} implies that $\partial_t (e^{-\overline \Omega}e^{\Omega}) = \O{\alpha^{q+1}}$ and, specifically,
\begin{align} 
    \partial_\alpha^{k} \partial_t (e^{-\overline \Omega}e^{\Omega}) = 0
\end{align}
for all $k \leq q$.

In the following discussion, we shall prove a bound on $\norm{\partial_t (e^{-\overline \Omega}e^{\Omega})}$ that scales linearly with $n$ and use that to bound the Magnus truncation error.

The derivative of  $e^{\Omega}$ is $\partial_{t} e^{\Omega}  = B_I e^{\Omega }$. Recall that for simplifying the notation, we did not write the dependence of $\Omega$ on $t$ explicitly. For $\wt{\Omega}(t)$ we write the $t$ dependence and compute the derivative of $e^{-\wt{\Omega}(t)}$ as follows: 
\begin{align}
	\partial_{t} e^{-\wt{\Omega}(t)} & = \lim_{dt \to 0}\frac{e^{-\wt{\Omega}(t+dt)} - e^{-\wt{\Omega}(t) }}{dt}\\
	& = \lim_{dt \to 0} \frac{e^{- \wt{\Omega} (t)  - \wt{\Omega}(t)^{\prime} dt  + O(dt^2) } - e^{- \wt{\Omega} (t)} }{dt}\\
	& = \lim_{dt \to 0} \lim_{\tau \to 1}\frac{e^{- \wt{\Omega }(t)\tau} \left( \T \left[  \exp \left(-  i \int_0^{\tau} dx   e^{\wt{\Omega}(t)x} (\wt{\Omega}(t)^{\prime} dt)  e^{-\wt{\Omega}(t) x}      \right)\right]  - I  \right)}{dt} \\
	& = - e^{- \wt{\Omega }(t)}\left(\int_0^{1} dx e^{ \wt{\Omega}(t)x } \wt{\Omega}(t)^{\prime} e^{- \wt{\Omega}(t)x} \right)~,
\end{align}	
which follows from the first-order term in the expansion of the time-ordered exponential, and also by taking the limit $dt\to 0$.  
Using this identity, we have
\begin{align}
\partial_{t} (e^{-\overline{\Omega} } e^{\Omega}) &=( \partial_{t} e^{-\overline{\Omega} }) e^{\Omega} + e^{-\overline{\Omega} }\partial_{t} e^{\Omega}\\
&=  - e^{- \wt{\Omega }(t)}\left(\int_0^{1} dx e^{ \wt{\Omega}(t)x } \wt{\Omega}(t)^{\prime} e^{- \wt{\Omega}(t)x} \right) e^{\Omega}+\alpha e^{-\overline{\Omega}} B_Ie^{\Omega}\\
& = - e^{- \wt{\Omega }(t)}\left( \sum_{j=1}^q\int_0^{1} dx e^{ \wt{\Omega}(t)x } {\Omega}_j(t)^{\prime} e^{- \wt{\Omega}(t)x} - \int_0^1 dx \Omega_1(t)^{\prime}   \right)e^{\Omega(t)},
\end{align}
where we used the fact that $e^{\Omega} = \mathcal{T}\left[e^{\int_0^t B_I(\tau)  \mathrm{d} \tau}\right]$ and therefore, $\partial_{t} e^{\Omega} =B_I(t) e^{\Omega}$. The second term in the last equality follows from the fact that $B_I(t) = \Omega_1(t)'$ and we trivially added an integration over $x$. 

Terms inside the bracket in the aforementioned equation can be simplified as follows:
\begin{align}
	&\sum_{j=1}^q\int_0^{1} dx e^{ \wt{\Omega}(t)x } {\Omega}_j(t)^{\prime} e^{- \wt{\Omega}(t)x} - \int_0^1 dx \Omega_1(t)^{\prime}  \nonumber \\
	& = \sum_{j=2}^q\int_0^{1} dx e^{ \wt{\Omega}(t)x } {\Omega}_j(t)^{\prime} e^{- \wt{\Omega}(t)x} + \left( \int_0^{1} dx e^{ \wt{\Omega}(t)x } {\Omega}_1(t)^{\prime} e^{- \wt{\Omega}(t)x}- \int_0^1 dx \Omega_1(t)^{\prime}  \right)\\
	& = \sum_{j=2}^q \int_0^{1} dx e^{ \wt{\Omega}x } {\Omega}_j^{\prime} e^{- \wt{\Omega}x} + \int_0^1 dx \int_0^x dy e^{\wt{\Omega}y} [ \wt{\Omega}, \Omega_1^{'}]e^{-\wt{\Omega}y}~.\label{eq:C11}
\end{align}

Here, we used the following argument:
\begin{align}
	e^{\wt{\Omega}x} \Omega_1^{'} e^{-\wt{\Omega}x} - \Omega_1^{'} &= \int_0^x dy \partial_y e^{\wt{\Omega}y} \Omega_1^{'} e^{-\wt{\Omega}y}  
	= \int_0^x dy e^{\wt{\Omega}y} [ \wt{\Omega}, \Omega_1^{'}]e^{-\wt{\Omega}y}
 \end{align}

 We analyze $ \int_0^x dy e^{\wt{\Omega}y} [ \wt{\Omega}, \Omega_1^{'}]e^{-\wt{\Omega}y}$ further by adding and subtracting $\int_0^x dy [ \wt{\Omega}, \Omega_1^{'}]$. Then 
 \begin{align}
\int_0^x dy e^{\wt{\Omega}y} [ \wt{\Omega}, \Omega_1^{'}]e^{-\wt{\Omega}y} &=   \int_0^x dy e^{\wt{\Omega}y} [ \wt{\Omega}, \Omega_1^{'}]e^{-\wt{\Omega}y}  - \int_0^x dy [ \wt{\Omega}, \Omega_1^{'}] + \int_0^x dy [ \wt{\Omega}, \Omega_1^{'}]\\
& = \int_0^x dy \int_0^y dz e^{\wt{\Omega}z} [\wt{\Omega},[\wt{\Omega},\Omega_1^{'}]] e^{-\wt{\Omega}z} + \int_0^x dy [ \wt{\Omega}, \Omega_1^{'}]~.
 \end{align}

 By repeating the same procedure multiple times, we get that 
 \begin{align}
  \int_0^x dy e^{\wt{\Omega}y} [ \wt{\Omega}, \Omega_1^{'}]e^{-\wt{\Omega}y} &  = \underbrace{\int_0^x dv_q\cdots \int_0^{v_2} dv_1}_{q~\operatorname{times}} e^{\wt{\Omega}v_1} \underbrace{[\wt{\Omega}, [\dots, [\wt{\Omega}, \Omega_1^{\prime}]]\dots]}_{q~\operatorname{nested ~ commutators}}e^{-\wt{\Omega}v_1} \nonumber \\
  & \qquad \qquad +\sum_{j=1}^{q-1} \int_0^x dv_{j} \cdots \int_0^{v_2} dv_1  \underbrace{[\wt{\Omega}, [\dots, [\wt{\Omega},  \Omega_1^{\prime}]]\dots]}_{j~\operatorname{nested ~ commutators}}\label{eq:approx3}\\ 
  &= \int_0^x dv_q\cdots \int_0^{v_2} dv_1 e^{\wt{\Omega}v_1}\operatorname{adj}^q_{\wt{\Omega}}\Omega_1^{\prime}e^{-\wt{\Omega}v_1}  +\sum_{j=1}^{q-1} \int_0^x dv_{j} \cdots \int_0^{v_2} dv_1  \operatorname{adj}^j_{\wt{\Omega}}\Omega_1^{\prime}~\\
    &= \int_0^x dv_q\cdots \int_0^{v_2} dv_1 e^{\wt{\Omega}v_1}\operatorname{adj}^q_{\wt{\Omega}}\Omega_1^{\prime}e^{-\wt{\Omega}v_1}  +\mathcal P_q(\alpha)~. \label{eq:C17} 
 \end{align}
 where we use $\mathcal P_q(\alpha)$ to denote polynomials of degree at most $q$ in $\alpha$.
 Using similar techniques, we have 
 \begin{align} 
      \int_0^{1} dx e^{ \wt{\Omega}x } {\Omega}_j^{\prime} e^{- \wt{\Omega}x}
      = \int_0^1 dx \int_0^x dv_{q-j+1}\dots \int_0^{v_2} dv_{1}
      e^{\wt{\Omega}v_1}
       \operatorname{adj}_{\overline \Omega}^{q-j+1} \Omega_j^{\prime}
       e^{-\wt{\Omega}v_1} + \mathcal P_q(\alpha). \label{eq:C18}
 \end{align}
 Substituting \cref{eq:C17,eq:C18} into \cref{eq:C11}, we get
 \begin{align}
 \partial_t (e^{-\overline{\Omega} } e^{\Omega})
 = \underbrace{\sum_{j = 1}^q \int_0^1 dx \int_0^x dv_{q-j+1}\dots \int_0^{v_2} dv_{1}
       e^{\wt{\Omega}v_1}
        \operatorname{adj}_{\overline \Omega}^{q-j+1} \Omega_j^{\prime}
        e^{-\wt{\Omega}v_1}}_{=\O{\alpha^{q+1}}} + \mathcal P_q(\alpha). \label{eq:before-P-vanish}
 \end{align}
 Recall that $ \partial_t (e^{-\overline{\Omega} } e^{\Omega}) = \O{\alpha^{q+1}}$.
 Therefore, $\mathcal P_q(\alpha)$ must vanish in \cref{eq:before-P-vanish}, leaving
 \begin{align} 
 \partial_t (e^{-\overline{\Omega} } e^{\Omega})
 = \sum_{j = 1}^q \int_0^1 dx \int_0^x dv_{q-j+1}\dots \int_0^{v_2} dv_{1}
       e^{\wt{\Omega}v_1}
        \operatorname{adj}_{\overline \Omega}^{q-j+1} \Omega_j^{\prime}
        e^{-\wt{\Omega}v_1}. \label{eq:after-P-vanish}
 \end{align}

 We can now use \cref{eq:after-P-vanish} to bound the Magnus truncation error:
 \begin{align} 
      \norm{e^{\overline{\Omega}}- e^{\Omega}}
      &\leq \norm{e^{-\overline{\Omega}}e^{\Omega} - I}
      \leq \int_0^t dt \norm{\partial_t (e^{-\overline{\Omega}}e^{\Omega})}\\
      &\leq \sum_{j = 1}^q \int_0^t dt  \int_0^1 dx \int_0^x dv_{q-j+1}\dots \int_0^{v_2} dv_{1}
      \norm{
              \operatorname{adj}_{\overline \Omega}^{q-j+1} \Omega_j^{\prime}}\\
       &= \sum_{j = 1}^q   \frac{t}{(q-j+2)!}
      \norm{
              \operatorname{adj}_{\overline \Omega}^{q-j+1} \Omega_j^{\prime}}.
 \end{align}
Using \cref{cor:Omega-structure,lem:commutator-between-H-K}, we have that
\begin{align} 
\operatorname{adj}_{\overline \Omega}^{q-j+1} \Omega_j^{\prime}
          \in \mathcal H(c_j,\kappa_j,k_j,\alpha^{q+1} d^{q+1} t^{q})
\end{align}
for some constants $c_j,\kappa_j, k_j$ that depend on $q, k$ but are independent of $n, \alpha, d, t$.
Applying \cref{lem:norm-of-concentrated-H} to bound their norms, we arrive at
\begin{align} 
     \norm{e^{\overline{\Omega}}- e^{\Omega}} = \O{n (\alpha t d)^{q+1}}.
\end{align}

\section{Concentrated $k$-local Hamiltonians}\label{app:concentrated-operator}

Under the evolution generated by $A$, the Hamiltonian $B$ is no longer $k$-local.
However, because $A$ is geometrically local, the interaction-picture Hamiltonian and the Magnus operator inherits many properties of a $k$-local Hamiltonian.
In this section, we develop tools to capture these properties.

\begin{definition}[Concentrated Operators]\label{def:concentrated-operator}
An operator $A$ is considered concentrated on site $i$ if it admits a decomposition $A = \sum_{r = 0}^{\infty} A_{i,r}$ such that
\begin{itemize}
     \item $A_{i,r}$ is supported entirely within a radius $r$ from $i$, and
     \item there exists a constant $c$ and $\kappa$ such that $\norm{A_{i,r}} \leq c e^{-r/\kappa}$ for all $i, r$. 
 \end{itemize} 
 We use $\mathcal C_i(c,\kappa)$ to denote the set of such operators $A$ and call $i$ the center of $A$.
\end{definition}

If $A \in \mathcal C_i(c,\kappa)$, it follows that $\norm{A} \leq c(1+\kappa)$.
Note that if $A \in \mathcal C_i(c,\kappa)$, then $A \in \mathcal C_i(c',\kappa')$ for all $c' > c, \kappa' > \kappa$.

The following lemma provides a structure for the commutator between two concentrated operators.
\begin{lemma}\label{lem:comm-decomposition}
If $A \in \mathcal C_{i}(c,\kappa)$ and $B \in \mathcal C_{j}(c,\kappa)$, there exists a set of concentrated operators $\{\hat{A}_\ell: \hat{A}_\ell \in C_{i}(c,2\kappa) \}$ 
and $\{ \hat{B}_\ell: \hat{B}_\ell \in C_{j}(c,2\kappa) \}$ such that
\begin{align} 
    \comm{A}{B} = \sum_\ell a_\ell \hat{A}_\ell \hat{B}_\ell, 
\end{align}
and $\sum_\ell \abs{a_\ell} \leq 4 e^{-d_{ij}/4\kappa}$,
where $d_{ij}$ is the distance between $i$ and $j$.
\end{lemma}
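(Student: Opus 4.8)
The plan is to work directly from the shell decompositions guaranteed by \cref{def:concentrated-operator}, namely $A = \sum_{r\ge 0} A_{i,r}$ and $B = \sum_{s\ge 0} B_{j,s}$ with $\norm{A_{i,r}}\le c\, e^{-r/\kappa}$ and $\norm{B_{j,s}}\le c\, e^{-s/\kappa}$, and to expand the commutator bilinearly as $\comm{A}{B} = \sum_{r,s\ge 0}\comm{A_{i,r}}{B_{j,s}}$. First I would use geometric locality to discard most of these terms: since $A_{i,r}$ is supported in the ball of radius $r$ about $i$ and $B_{j,s}$ in the ball of radius $s$ about $j$, their supports can overlap only if $r+s\ge d_{ij}$, and otherwise the commutator vanishes. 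Hence only the pairs with $r+s\ge d_{ij}$ survive.

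Next, for each surviving pair I would renormalize the two factors so that they land in the larger classes $\mathcal C_i(c,2\kappa)$ and $\mathcal C_j(c,2\kappa)$. Writing $A_{i,r} = a^A_{r}\,\hat A_{r}$ with $\hat A_r := A_{i,r}/a^A_r$ and choosing $a^A_r = \norm{A_{i,r}}\,e^{r/2\kappa}/c$, the single-shell operator $\hat A_r$ (still supported within radius $r$) has norm at most $c\, e^{-r/2\kappa}$ and therefore belongs to $\mathcal C_i(c,2\kappa)$; moreover $a^A_r\le e^{-r/2\kappa}$ precisely because $\norm{A_{i,r}}\le c\, e^{-r/\kappa}$. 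Treating $B_{j,s}$ identically yields $\hat B_s\in\mathcal C_j(c,2\kappa)$ with $a^B_s\le e^{-s/2\kappa}$. Each product $A_{i,r}B_{j,s} = a^A_r a^B_s\,\hat A_r\hat B_s$ is then of the required form, with coefficient bounded by $e^{-(r+s)/2\kappa}$; the companion term $-B_{j,s}A_{i,r}$ arising in the commutator is handled the same way, contributing an identical coefficient bound (only the operator ordering differs, which affects which factor sits on the left but not the norm bookkeeping).

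Finally I would sum the coefficients over the surviving index set. Collecting pairs by their total radius $m=r+s\ge d_{ij}$, of which there are $m+1$, gives $\sum_\ell\abs{a_\ell}\le 2\sum_{m\ge d_{ij}}(m+1)e^{-m/2\kappa}$, where the factor of $2$ accounts for the two commutator orderings. The main obstacle, and the step demanding the most care, is turning this polynomially weighted geometric tail into the clean bound $4\,e^{-d_{ij}/4\kappa}$. The key manoeuvre is to split the decay as $e^{-m/2\kappa}=e^{-m/4\kappa}\,e^{-m/4\kappa}$: the prefactor $(m+1)e^{-m/4\kappa}$ is uniformly bounded and can be absorbed into the constant, while the remaining factor $e^{-m/4\kappa}$ furnishes a convergent geometric series whose tail is proportional to $e^{-d_{ij}/4\kappa}$. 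This is exactly the mechanism that trades the halved decay rate ($1/2\kappa\to 1/4\kappa$) for control of the combinatorial degeneracy, and collecting the constants yields $\sum_\ell\abs{a_\ell}\le 4\,e^{-d_{ij}/4\kappa}$. I would double-check here that the membership $\hat A_r\in\mathcal C_i(c,2\kappa)$ and $\hat B_s\in\mathcal C_j(c,2\kappa)$ only requires the single-shell norm estimate (placing all the weight of each $\hat A_r$ at its own radius), since this is what makes the doubling of $\kappa$ both necessary and sufficient for the coefficient extraction to go through.
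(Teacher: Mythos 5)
Your structural setup is sound: the shell expansion $\comm{A}{B}=\sum_{r,s}\comm{A_{i,r}}{B_{j,s}}$, the observation that terms with $r+s<d_{ij}$ vanish, and the renormalization placing $\hat A_r\in\mathcal C_i(c,2\kappa)$, $\hat B_s\in\mathcal C_j(c,2\kappa)$ with coefficients bounded by $e^{-(r+s)/2\kappa}$ are all correct. The gap is the final summation: you cannot ``collect the constants'' to get $4$. You must control $2\sum_{m\geq d_{ij}}(m+1)e^{-m/2\kappa}$, and your splitting $e^{-m/2\kappa}=e^{-m/4\kappa}\cdot e^{-m/4\kappa}$ costs two $\kappa$-dependent factors, not absolute ones: $\sup_{m}(m+1)e^{-m/4\kappa}=4\kappa\,e^{-1+1/4\kappa}\approx 4\kappa/e$ (attained near $m=4\kappa-1$), and $\sum_{m\geq d_{ij}}e^{-m/4\kappa}=e^{-d_{ij}/4\kappa}/(1-e^{-1/4\kappa})\leq(4\kappa+1)e^{-d_{ij}/4\kappa}$. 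So this route yields $\sum_\ell\abs{a_\ell}=\mathcal{O}(\kappa^2)\,e^{-d_{ij}/4\kappa}$ rather than $4e^{-d_{ij}/4\kappa}$, and this is not an artifact of loose bounding: for $\kappa\gg1$ and $d_{ij}=\mathcal{O}(1)$ the sum $2\sum_{m\geq d_{ij}}(m+1)e^{-m/2\kappa}=\Theta(\kappa^2)$, which genuinely exceeds $4e^{-d_{ij}/4\kappa}$. Pairwise expansion over shells inherently pays for the multiplicity $m+1$ of shell pairs at total radius $m$ and for the geometric tail, and both factors scale with $\kappa$, so no bookkeeping recovers the stated constant from this decomposition.

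The paper's proof avoids this by never expanding into individual shell pairs. It splits each operator into just \emph{two} blocks at the half-distance threshold, $A=A_<+A_\geq$ with $A_<=\sum_{r<d_{ij}/2}A_{i,r}$ (likewise for $B$), uses $\comm{A_<}{B_<}=0$, and shifts the exponential weight blockwise: for $r\geq d_{ij}/2$ one has $e^{-r/\kappa}\leq e^{-d_{ij}/4\kappa}e^{-r/2\kappa}$, so $\tilde A_\geq\equiv e^{d_{ij}/4\kappa}A_\geq$ lies in $\mathcal C_i(c,2\kappa)$ as a whole. Writing $\comm{A}{B}=e^{-d_{ij}/4\kappa}\left(\comm{A_<}{\tilde B_\geq}+\comm{\tilde A_\geq}{B}\right)$ and expanding the two commutators gives exactly four products, each with coefficient of modulus $e^{-d_{ij}/4\kappa}$---no tail sums, no polynomial multiplicities, hence the constant $4$. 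Your weaker $\mathcal{O}(\kappa^2)e^{-d_{ij}/4\kappa}$ bound would likely still suffice downstream, since \cref{lem:commutator-between-H-K} absorbs $\kappa$-dependent constants by rescaling, but it does not prove \cref{lem:comm-decomposition} as stated; to repair the argument, either weaken the lemma to a $\kappa$-dependent constant or regroup your shells into the paper's two blocks before expanding the commutator.
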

\begin{proof}
Let $A = \sum_{r} A_{i,r}$ and $B = \sum_{r} B_{i,r}$ be the decompositions of $A$ and $B$ according to \cref{def:concentrated-operator}.
We divide $A$ into a sum of
$A_{<} = \sum_{r < d_{ij}/2} A_{i,r}$ and $A_{\geq} = \sum_{r \geq d_{ij}/2} A_{i,r}$ based on the radius $r$ and divide $B = B_{<} + B_{\geq}$ similarly. 
The commutator can then be expanded as
\begin{align} 
    \comm{A}{B} = \comm{A_<}{B_\geq} + \comm{A_\geq}{B}. 
\end{align}
Note that $\comm{A_<}{B_<}$ vanishes because the operators are supported on distinct regions of the lattice.

Note that $A_<, A_\geq$ are both in $\mathcal C_i(c,\kappa) \subseteq C_i(c,2\kappa)$.
We can, however, prove a stronger statement for $A_\geq$.
Since $r \geq d_{ij}/2$, the identity $e^{-r/\kappa}\leq e^{-d_{ij}/4\kappa} e^{-r/2\kappa}$ implies $\tilde A_\geq \equiv e^{d_{ij}/4\kappa} A_\geq \in C_i(c,2\kappa)$.
Repeating the same argument for $B_\geq$, we have
\begin{align} 
     \comm{A}{B} = e^{-d_{ij}/4\kappa} \left(\comm{A_<}{\tilde B_\geq} 
     + \comm{\tilde A_\geq}{ B}
     \right).
\end{align}
Expanding the commutators, we arrive at \cref{lem:comm-decomposition} with $a_\ell = \pm e^{-d_{ij}/4\kappa}$,  and $\hat{A}_1 \equiv A_<$,  $ \hat{A}_2 \equiv \tilde A_{\geq} $ ,  $\hat{B}_1 \equiv B_{\geq}$, and  $\hat{B}_2 \equiv B$.
\end{proof}

\begin{definition}[Concentrated $k$-local Strings]
An operator $A$ is a concentrated $k$-local string if there exist constants $c, \kappa$ such that
\begin{align} 
    A = A^{(1)}_{i_1} A^{(2)}_{i_2} \dots A^{(k)}_{i_k},         
\end{align}    
where $A^{(j)}_{i_j} \in \mathcal C_{i_j}(c,\kappa)$ for all $j = 1,\dots,k$.
We use $S(A) = \{i_1,\dots,i_k\}$ to denote the set of centers.
\end{definition}

\begin{definition}[Concentrated $k$-local Hamiltonians]\label{def:concentrated-k-local-H}
A Hamiltonian $H = \sum_{\mu} a_\mu {H_\mu}$ is a concentrated $k$-local Hamiltonian with an effective interaction degree $d$ if
\begin{enumerate}
    \item $H_\mu = A^{(\mu,1)}_{i_{\mu,1}} A^{(\mu,2)}_{i_{\mu,2}} \dots A^{(\mu,k)}_{i_{\mu,k}}$ are concentrated $k$-local strings for all $\mu$, where $A^{(\mu,j)}_{i_{\mu,j}}\in \mathcal C_{i_{\mu,j}}(c,\kappa)$, and
    \item for every site $l$ on the lattice,
    \begin{align} 
         \sum_{\mu: l\in S(H_\mu)} \abs{a_\mu} \leq d. 
    \end{align}
\end{enumerate}
We use $\mathcal H(c,\kappa,k,d)$ to denote the set of such Hamiltonians. Note that the definition of $H$ is invariant under the following rescaling: $c \to g c$ and $d \to d/g^k$ for some constant $g$.  
\end{definition}

Note that the definition of the effective interaction degree in \cref{def:concentrated-k-local-H} is slightly different from the one defined in \cref{sec:setup}. 

We also note that a geometrically local Hamiltonian is also a concentrated $1$-local Hamiltonian:
\begin{lemma}\label{lem:geometrically-local-is-also-concentrated}
Let $B$ be a geometrically local with an interaction range $\chi$. 
It follows that $B\in \mathcal H(c,\kappa,1,1)$ for $c = e^{\chi/\kappa}$ and $\kappa$ is any constant. 
\end{lemma}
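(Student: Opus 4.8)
The plan is to verify that a geometrically local Hamiltonian $B$ with interaction range $\chi$ fits the definition of a concentrated $1$-local Hamiltonian (\cref{def:concentrated-k-local-H}) with the claimed parameters. Since $B$ is geometrically local, it admits a decomposition $B = \sum_{i=1}^n B_i$ where each $B_i$ is supported on a neighborhood of diameter $\chi$ around site $i$ and $\norm{B_i}\leq 1$. First I would identify each term $B_i$ with a single concentrated $1$-local string centered at $i$, so that $a_\mu = 1$, $H_\mu = B_i$, and $S(H_\mu) = \{i\}$ in the notation of the definition.

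The main verification is that each $B_i \in \mathcal C_i(c,\kappa)$ for $c = e^{\chi/\kappa}$. For this I would exhibit the radial decomposition required by \cref{def:concentrated-operator}. Since $B_i$ is supported entirely within radius $\chi$ of $i$, I set $B_{i,\chi} = B_i$ and $B_{i,r} = 0$ for all $r \neq \chi$. Each component is trivially supported within its stated radius, so the only nontrivial requirement is the exponential norm bound $\norm{B_{i,r}} \leq c\, e^{-r/\kappa}$ for all $r$. For $r = \chi$, this reads $\norm{B_i} \leq e^{\chi/\kappa} e^{-\chi/\kappa} = 1$, which holds by the normalization $\norm{B_i}\leq 1$; for all other $r$ the component vanishes and the bound is automatic. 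Hence $B_i \in \mathcal C_i(c,\kappa)$ with $c = e^{\chi/\kappa}$ and $\kappa$ arbitrary.

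It remains to check the effective-interaction-degree condition in item~2 of \cref{def:concentrated-k-local-H}. For every site $l$, the sum $\sum_{\mu:\, l \in S(H_\mu)} \abs{a_\mu}$ collapses to the single term $\mu = l$ (since $S(B_i) = \{i\}$), giving $\abs{a_l} = 1 \leq d = 1$. This establishes $B \in \mathcal H(c,\kappa,1,1)$ as claimed.

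I do not anticipate a genuine obstacle here, as the statement is essentially a bookkeeping exercise matching definitions; the one subtlety worth flagging is the mismatch, noted just before the lemma, between the effective interaction degree used in \cref{sec:setup} and the one in \cref{def:concentrated-k-local-H}. The cleanest way to handle it is to work directly from the concentrated-Hamiltonian definition and its rescaling freedom ($c \to gc$, $d \to d/g^k$), which for $k = 1$ lets one trade the constant $c = e^{\chi/\kappa}$ against the degree, confirming consistency of the two conventions.
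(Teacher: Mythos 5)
Your proof is correct; the paper states this lemma without proof (treating it as immediate from the definitions), and your verification is exactly the intended bookkeeping: put all of $B_i$ into the single radial shell $r=\chi$, so that $\norm{B_{i,\chi}}\leq 1 = e^{\chi/\kappa}e^{-\chi/\kappa} = c\,e^{-\chi/\kappa}$ gives $B_i\in\mathcal C_i(e^{\chi/\kappa},\kappa)$, and then check the degree condition site by site, where each site is the center of exactly one string with coefficient $1\leq d=1$. No gap remains.
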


Concentrated $k$-local Hamiltonians are generalization of $k$-local Hamiltonians.
The following lemma upper bounds the norm of a concentrated $k$-local Hamiltonian.
\begin{lemma}\label{lem:norm-of-concentrated-H}
If $H \in \mathcal H(c,\kappa,k,d)$ on $n$ qubits, then $\norm{H} \leq c^k (\kappa+1)^k n d$.    
\end{lemma}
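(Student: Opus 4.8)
The plan is to reduce the bound to two elementary estimates: a uniform norm bound on each concentrated $k$-local string $H_\mu$, and a counting bound on the total weight $\sum_\mu \abs{a_\mu}$. First I would control an individual string. By the remark following \cref{def:concentrated-operator}, any $A \in \mathcal C_i(c,\kappa)$ satisfies $\norm{A} \leq c(1+\kappa)$. Since each $H_\mu = A^{(\mu,1)}_{i_{\mu,1}} A^{(\mu,2)}_{i_{\mu,2}} \dots A^{(\mu,k)}_{i_{\mu,k}}$ is a product of $k$ such concentrated operators, submultiplicativity of the operator norm gives $\norm{H_\mu} \leq \left(c(1+\kappa)\right)^k = c^k(\kappa+1)^k$, uniformly in $\mu$.

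Applying the triangle inequality to $H = \sum_\mu a_\mu H_\mu$ then yields $\norm{H} \leq c^k(\kappa+1)^k \sum_\mu \abs{a_\mu}$, so the only remaining task is to show $\sum_\mu \abs{a_\mu} \leq n d$. I would obtain this by double counting against the effective-interaction-degree constraint of \cref{def:concentrated-k-local-H}. Summing the inequality $\sum_{\mu: l \in S(H_\mu)} \abs{a_\mu} \leq d$ over all $n$ lattice sites $l$ and exchanging the order of summation gives $\sum_\mu \abs{S(H_\mu)}\,\abs{a_\mu} \leq n d$, where each weight $\abs{a_\mu}$ is counted exactly $\abs{S(H_\mu)}$ times. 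Because every concentrated $k$-local string carries at least one center, $\abs{S(H_\mu)} \geq 1$, so the left-hand side dominates $\sum_\mu \abs{a_\mu}$, delivering $\sum_\mu \abs{a_\mu} \leq n d$ and hence the claimed bound $\norm{H} \leq c^k(\kappa+1)^k n d$.

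There is no genuine obstacle here; the step requiring the most care is the double-counting argument, where one must ensure that every term's weight is registered at least once in the site-wise sum. This is exactly what $\abs{S(H_\mu)} \geq 1$ guarantees, and it is the only place the structure of \cref{def:concentrated-k-local-H} (rather than mere $k$-locality) is used. Everything else follows directly from submultiplicativity of the norm and the triangle inequality, so I expect the full proof to be short.
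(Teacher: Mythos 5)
Your proof is correct and follows essentially the same route as the paper's: bound each string by $\norm{H_\mu}\leq c^k(\kappa+1)^k$ via submultiplicativity, apply the triangle inequality, and then bound $\sum_\mu\abs{a_\mu}\leq nd$ by summing the degree constraint over all sites and noting each $\mu$ is counted at least once. Your version merely makes the paper's ``overcounting'' step explicit through the observation $\abs{S(H_\mu)}\geq 1$, which is a welcome clarification but not a different argument.
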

\begin{proof}
We have
\begin{align} 
    \norm{H} \leq \sum_\mu \abs{a_\mu}\norm{H_\mu}
    \leq c^k(\kappa+1)^k \sum_\mu \abs{a_\mu} 
    \leq c^k(\kappa+1)^k \sum_i \sum_{\mu: i \in S(H_\mu)} \abs{a_\mu}
    \leq c^k(\kappa+1)^k n d,
\end{align}
where in the second inequality we used that $\Vert A_{i_{\mu, l}}^{(\mu, l)}\Vert \leq c (1+\kappa), \forall A_{i_{\mu, l}}^{(\mu, l)} \in C_{i_{\mu,j}}(c,\kappa)$, and in the final inequality we overcounted the sum over $\mu$ by summing over index $i$ and $\mu: i \in S(H_{\mu})$. 
\end{proof}

\begin{lemma}\label{lem:commutator-between-H-K}
If $H\in \mathcal H(c,\kappa,k_1,d_1)$ and $K \in \mathcal H(c,\kappa,k_2,d_2)$, then $\comm{H}{K} \in \mathcal H(c',\kappa',k',d')$ for 
\begin{align}
c' &= c (8(4\kappa+1) k_1 k_2 )^{1/(k_1+k_2)} ~,\\
\kappa' &= 2\kappa~,\\ 
k' &= k_1+k_2~, \\
d' &=d_1d_2~.
\end{align} 
\end{lemma}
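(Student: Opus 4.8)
\textbf{Proof plan for \cref{lem:commutator-between-H-K}.}
The plan is to write each input Hamiltonian as a sum over its concentrated $k$-local strings, $H = \sum_\mu a_\mu H_\mu$ with $H_\mu = A^{(\mu,1)}_{i_{\mu,1}}\cdots A^{(\mu,k_1)}_{i_{\mu,k_1}}$ and $K = \sum_\nu b_\nu K_\nu$ analogously, then expand the commutator bilinearly as $\comm{H}{K} = \sum_{\mu,\nu} a_\mu b_\nu \comm{H_\mu}{K_\nu}$ and show that each surviving term $\comm{H_\mu}{K_\nu}$ is (a sum of) concentrated $(k_1+k_2)$-local strings, with coefficients whose total weight I can control site by site. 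The expected output membership $\comm{H}{K}\in\mathcal H(c',2\kappa,k_1+k_2,d_1d_2)$ then follows by verifying the two conditions of \cref{def:concentrated-k-local-H}: that the building blocks are genuine concentrated strings of the claimed length with the claimed $(c',\kappa')$, and that the summed coefficient weight on any fixed site is bounded by $d_1 d_2$.

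The key technical step is to analyze a single product commutator $\comm{H_\mu}{K_\nu}$. First I would use the Leibniz/telescoping identity for a commutator of a product, writing $\comm{H_\mu}{K_\nu}$ as a sum over positions $\ell$ of terms in which all factors of $H_\mu$ and $K_\nu$ multiply together except that one factor $A^{(\mu,a)}_{i_{\mu,a}}$ of $H_\mu$ is replaced by its commutator with one factor $A^{(\nu,b)}_{i_{\nu,b}}$ of $K_\nu$. Each such elementary commutator is between two concentrated \emph{operators} (not strings), so I can invoke \cref{lem:comm-decomposition}: it rewrites $\comm{A^{(\mu,a)}_{i_{\mu,a}}}{A^{(\nu,b)}_{i_{\nu,b}}}$ as $\sum_\ell a_\ell \hat A_\ell \hat B_\ell$ with $\hat A_\ell\in\mathcal C_{i_{\mu,a}}(c,2\kappa)$, $\hat B_\ell\in\mathcal C_{i_{\nu,b}}(c,2\kappa)$, and crucially $\sum_\ell |a_\ell|\le 4 e^{-d_{i_{\mu,a}i_{\nu,b}}/4\kappa}$. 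Reassembling, each resulting term is a product of the $k_1-1$ untouched factors of $H_\mu$, the $k_2-1$ untouched factors of $K_\nu$, and the two new concentrated operators $\hat A_\ell,\hat B_\ell$ — a total of $k_1+k_2$ concentrated operators, each in some $\mathcal C(c,2\kappa)$, i.e.\ exactly a concentrated $(k_1+k_2)$-local string with $\kappa'=2\kappa$. This accounts for $k'=k_1+k_2$ and $\kappa'=2\kappa$, and the extra factor $c'/c$ absorbs the numerical weights.

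The heart of the bound — and the step I expect to be the main obstacle — is verifying condition~2, that $\sum_{\text{output strings}:\,l\in S}|\text{coeff}|\le d_1 d_2$ on every fixed site $l$, together with extracting the exponential decay factor that yields the stated $c'$. The point is that a given output string touches site $l$ only through one of its $k_1+k_2$ centers, so I must sum over (i) which factor of $H_\mu$ and which factor of $K_\nu$ commuted, and (ii) the distance-decay weight $4e^{-d_{ij}/4\kappa}$ summed over all pairs of centers. The spatial sum $\sum_j e^{-d_{ij}/4\kappa}$ over a site $j$ at distance $d_{ij}$ from a fixed site $i$ on the lattice converges to a constant of order $(4\kappa+1)$ (the geometric-series/lattice-volume factor), and the combinatorial factor $k_1 k_2$ counts the choice of which two factors commute. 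Combining the per-site weight bounds $d_1$ and $d_2$ from the inputs with this $8(4\kappa+1)k_1k_2$ overhead, and then redistributing that constant overhead symmetrically across the $k_1+k_2$ concentrated factors via the rescaling invariance $c\to gc,\ d\to d/g^{k}$ noted in \cref{def:concentrated-k-local-H} (taking $g=(8(4\kappa+1)k_1k_2)^{1/(k_1+k_2)}$), gives precisely $c' = c\,(8(4\kappa+1)k_1k_2)^{1/(k_1+k_2)}$ while keeping $d'=d_1d_2$. The delicate bookkeeping is ensuring that the lattice sum over commuting partners and the per-site input degrees are correctly decoupled so that no factor of $n$ leaks into the effective degree $d'$.
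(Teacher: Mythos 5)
Your proposal follows essentially the same route as the paper's proof: expand both Hamiltonians into concentrated strings, apply the Leibniz rule to each $\comm{H_\mu}{K_\nu}$, invoke \cref{lem:comm-decomposition} on the resulting elementary commutators to get concentrated $(k_1+k_2)$-local strings with $\kappa'=2\kappa$, bound the per-site weight via the lattice sum $\sum_j e^{-d_{ij}/4\kappa}\leq 4\kappa+1$ together with the combinatorial factor $k_1k_2$ and the input degrees $d_1 d_2$, and finally use the rescaling invariance of \cref{def:concentrated-k-local-H} to move the $8(4\kappa+1)k_1k_2$ overhead into $c'$. This matches the paper's argument step for step, including the delicate decoupling of the spatial sum from the per-site degrees that prevents any factor of $n$ from entering $d'$.
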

\begin{proof}

Let $H = \sum_\mu a_\mu H_\mu$ and $K = \sum_\nu b_\nu K_\nu$ be decompositions into concentrated $k_1$-local and $k_2$-local strings.
Let $V_{\mu,\nu} = a_\mu b_\nu \comm{H_\mu}{K_\nu} = \comm{A^{(\mu,1)}_{i_{\mu, 1}} A^{(\mu,2)}_{i_{\mu, 2}} \dots A^{(\mu,k_1)}_{i_{\mu, k_1}}}{B^{(\nu,1)}_{j_{\nu, 1}} B^{(\nu,2)}_{j_{\nu, 2}} \dots B^{(\nu,k_2)}_{j_{\nu, k_2}}}$.
Using the chain rule for commutators, we have
\begin{align} 
    V_{\mu,\nu}
    = a_\mu b_\nu\sum_{u,v}  
    A^{(\mu,1)}_{i_{\mu, 1}}  \dots A^{(\mu,u-1)}_{i_{\mu, u-1}}
    B^{(\nu,1)}_{j_{\nu, 1}}  \dots B^{(\nu,v-1)}_{j_{\nu, v-1}}  
    \comm{ A^{(\mu,u)}_{i_{\mu, u}}}{B^{(\nu,v)}_{j_{\nu, v}}}
    B^{(\nu,v+1)}_{j_{\nu, v+1}}  \dots B^{(\nu,k_2)}_{j_{\nu, k_2}}
    A^{(\mu,u+1)}_{i_{\mu,u+1}}  \dots A^{(\mu,k_1)}_{i_{\mu,k_1}}.\label{eq:long-comm-chain-rule}
\end{align}
Since $A^{(\mu,u)}_{i_{\mu, u}}$ and $ B^{(\nu,v)}_{j_{\nu, v}}$ are concentrated operators, \cref{lem:comm-decomposition} implies a decomposition
\begin{align} 
    \comm{ A^{(\mu,u)}_{i_u}}{B^{(\nu,v)}_{j_{v}}} = 
    \sum_{\ell_{\mu,\nu,u,v}}
    \beta_{\ell_{\mu,\nu,u,v}}
    \hat{ A}^{(\ell_{\mu,\nu,u,v})}_{i_\mu} \hat{ B}^{(\ell_{\mu,\nu,u,v})}_{j_\nu}
\end{align}
such that $\hat {A}^{(\ell_{\mu,\nu,u,v})}_{i_\mu} \in \mathcal C_{i_\mu}(c,2\kappa)$, $\hat {B}^{(\ell_{\mu,\nu,u,v})}_{j_\nu} \in \mathcal C_{j_\nu}(c,2\kappa)$, and 
$\sum_{\ell_{\mu,\nu,u,v}} \abs{\beta_{\ell_{\mu,\nu,u,v}}} = 4 e^{-d_{i_\mu j_\nu}/4\kappa}$.
Therefore, the commutator $\comm{H}{K}$ can be decomposed as
\begin{align} 
    \comm{H}{K}
    = \sum_{\mu\nu} a_\mu b_\nu \sum_{u,v} \sum_{\ell_{\mu,\nu,u,v}} \beta_{\ell_{\mu,\nu,u,v}} V_{\mu,\nu,u,v,\ell_{\mu,\nu,u,v}}
    = \sum_\tau c_\tau V_\tau \label{eq:commutator-lemma5}
\end{align}
for some concentrated $(k_1+k_2)$-local strings $V_\tau = V_{\mu,\nu,u,v,\ell_{\mu,\nu,u,v}}$.
Here, we use $\tau$ to denote a tuple of $\mu,\nu,u,v,\ell_{\mu,\nu,u,v}$.

Next, we prove the second condition of \cref{def:concentrated-k-local-H} for some $d'$ to be determined.
A fixed site $i$ is in $S(V_{\mu,\nu,u,v,\ell_{\mu,\nu,u,v}})$ only if it is in either $S(H_\mu)$ or $S(K_\nu)$.
Suppose $i \in S(H_\mu) = \{i_1,\dots,i_{k_1}\}$.
For each $\mu, u, v$, we have
\begin{align} 
     \sum_{\nu} \sum_{\ell_{\mu,\nu,u,v}} \abs{a_\mu}\abs{b_\nu} \abs{\beta_{\ell_{\mu,\nu,u,v}}}
     = 4 \sum_{\nu}  \abs{a_\mu}\abs{b_\nu}  e^{-d_{i_\mu j_\nu}/4\kappa}
     \leq  4 \abs{a_\mu} \sum_{j\in \Lambda} e^{-d_{i_\mu j}/4\kappa} \sum_{\nu:j_\nu = j}  \abs{b_\nu}
     \leq 4(4\kappa+1) \abs{a_\mu} d_2.   
\end{align}

We then upper bound the sums over $u, v$ by $k_1, k_2$ respectively, and noting that the sum over $\mu$ is restricted to $\mu$ satisfying $i \in S(H_\mu)$, we have
\begin{align} 
    \sum_{\tau: i \in S(V_{\tau})}   \abs{c_\tau} \leq \left(\sum_{\tau: i \in S(H_{\mu})}   \abs{c_\tau}+ \sum_{\tau: i \in S(K_{\mu})}   \abs{c_\tau}\right)  \leq 8(4\kappa+1) k_1 k_2 d_1 d_2,\label{eq:bound-lemma5}
\end{align}
where a factor of two comes from the case when $i \in S(K_{\nu})$, and a factor of $d$ comes from the fact that $\sum_{\mu} |a_{\mu}| \leq d_1$. 

Therefore, by combining \cref{eq:commutator-lemma5} and \cref{eq:bound-lemma5} and using the invariance of $\mathcal H$ under rescaling, we get that 
\begin{align}
    [H,K] \in \mathcal H(c,2\kappa,k_1+k_2,8(4\kappa+1) k_1 k_2 d_1 d_2)
    = \mathcal H(c (8(4\kappa+1) k_1 k_2)^{1/(k_1+k_2)},2\kappa,k_1+k_2, d_1 d_2)~.
\end{align} 
\end{proof}

The following lemma is also useful for counting the support of a nested commutator between concentrated 1-local Hamiltonians.
\begin{lemma}\label{lem:commutators-between-1-local-Ham}
Let $H^{(1)}, \dots, H^{(q)} \in \mathcal H(c,\kappa,1,1)$ be $q$ concentrated $1$-local Hamiltonians.
For any integers $R \geq 1$, there exists $V = \sum_{i\in\Lambda} V_i$ such that
\begin{itemize}
        \item $V \in \mathcal H(c_q,2^q\kappa,q,1)$, where $c_q$ is a constant depending on only $q, \kappa$,
        \item $V_i$ is supported on at most $q + R$ qubits, and
        \item $\norm{V - \comm{H^{(q)}}{\dots\comm{H^{(2)}}{H^{(1)}}}} = \O{n e^{-R/4\kappa}}$.
    \end{itemize}    
\end{lemma}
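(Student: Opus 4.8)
The plan is to separate the claim into two parts: first establish that the \emph{untruncated} nested commutator is a concentrated $q$-local Hamiltonian, and then construct $V$ by spatially truncating it. For the first part I would iterate \cref{lem:commutator-between-H-K}. Writing $C \equiv \comm{H^{(q)}}{\dots\comm{H^{(2)}}{H^{(1)}}}$, I embed $H^{(1)},H^{(2)}\in\mathcal{H}(c,\kappa,1,1)$ into a common class and apply the lemma to obtain $\comm{H^{(2)}}{H^{(1)}}\in\mathcal{H}(c_2,2\kappa,2,1)$, where the effective degree stays at $1$ because $d'=d_1d_2=1$ at every step. Embedding the growing inner commutator and the next $H^{(j)}$ into a common $\mathcal{H}(\,\cdot\,,2^{j-1}\kappa,\,\cdot\,,\,\cdot\,)$ and repeating $q-1$ times yields $C\in\mathcal{H}(c_q,2^{q-1}\kappa,q,1)\subseteq\mathcal{H}(c_q,2^q\kappa,q,1)$ for a constant $c_q$ depending only on $q,\kappa$. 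This already proves the first bullet for $C$, and it will be inherited by $V$ because the truncations below only discard terms.

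For the second part I would use the decomposition $C=\sum_\tau c_\tau C_\tau$ into concentrated $q$-local strings $C_\tau=A^{(\tau,1)}_{j_1}\cdots A^{(\tau,q)}_{j_q}$ supplied by \cref{def:concentrated-k-local-H}, together with the per-site weight bound $\sum_{\tau: l\in S(C_\tau)}\abs{c_\tau}\le 1$. Anchoring each string at a distinguished center and invoking the concentrated structure of each factor (\cref{def:concentrated-operator}), I truncate every factor's tail and discard any string whose centers and retained tails do not fit inside a window of $q+R$ sites around the anchor. Setting $V=\sum_{i\in\Lambda}V_i$, where $V_i$ collects the retained, truncated strings anchored at $i$, makes each $V_i$ supported on at most $q+R$ qubits; since dropping tail terms from a concentrated operator preserves its $\mathcal{C}$-class and leaves the degree bound intact, we keep $V\in\mathcal{H}(c_q,2^q\kappa,q,1)$.

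The error $\norm{V-C}$ then splits into the discarded tails of retained factors and the fully discarded far-reaching strings, each carrying an exponentially small factor; summing over the $\O{n}$ possible anchor sites via the degree bound should give the target $\O{ne^{-R/4\kappa}}$. The main obstacle is that this last sum is \emph{not} justified by the output of \cref{lem:commutator-between-H-K} alone: membership in $\mathcal{H}(c_q,2^q\kappa,q,1)$ only controls the weight per site, not how much weight sits on strings whose centers span a large distance, so generic membership does not by itself imply that discarding large-window strings costs only $\O{ne^{-R/4\kappa}}$. To close this gap I would retain the inter-center decay that \cref{lem:comm-decomposition} produces---namely the factor $\sum_\ell\abs{a_\ell}\le 4e^{-d_{ij}/4\kappa}$---and propagate it through all $q-1$ nested commutators, showing that the weight of strings reaching distance $D$ from the anchor decays like $e^{-D/4\kappa}$ up to the tracked $2^q$ factors in the constants. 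Carrying this distance-decay bookkeeping across the layers, rather than the cruder per-site bound, is the technical heart of the argument; the remaining tail estimates are then routine applications of \cref{def:concentrated-operator} and \cref{lem:norm-of-concentrated-H}.
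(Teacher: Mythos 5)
Your first bullet is handled the same way the paper handles it: iterate \cref{lem:commutator-between-H-K} (with the embeddings you describe) to get membership in $\mathcal H(c_q,2^q\kappa,q,1)$, and note that truncation only discards terms. The problem is the third bullet. You have correctly located the difficulty — per-site degree bounds say nothing about weight on strings with far-spread centers — but the route you propose to close it is both unexecuted and, more importantly, cannot deliver the stated rate. Every application of \cref{lem:comm-decomposition} doubles the concentration parameter ($\kappa \to 2\kappa$), so after propagating through the nested commutators the inter-center factor at depth $j$ decays only at rate $2^{j+1}\kappa$, and the tails of the string factors produced by iterating \cref{lem:commutator-between-H-K} decay only at rate $2^{q}\kappa$. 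This degradation sits in the exponent, not ``in the constants'' as you assert. Your plan, even if completed, would give $\O{n\, e^{-R/(2^{\Theta(q)}\kappa)}}$ rather than $\O{n\,e^{-R/4\kappa}}$. The $q$-independence of the rate is not cosmetic: it is what produces the $q$-independent $e^{-R/8\chi}$ in \cref{cor:Omega-structure-geometrically-local} and justifies the choice $R = 8\chi\log(nt)+\const$ in the main text.

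The paper's proof avoids the degradation by never routing the error estimate through \cref{lem:comm-decomposition} or \cref{lem:commutator-between-H-K} at all (those are used only for the class-membership bullet, where a degraded $\kappa$ is harmless). Instead it expands the nested commutator over the centers \emph{and radii} of the original pieces $H^{(k)}_{i_k,r_k}$, which retain the undegraded rate $\kappa$, and truncates summand by summand, discarding every summand whose joint support exceeds $q+R$ qubits. Each discarded summand is bounded by plain submultiplicativity, $\norm{\comm{H^{(q)}_{i_q,r_q}}{\dots\comm{H^{(2)}_{i_2,r_2}}{H^{(1)}_{i,r_1}}}} \leq 2^q c^q e^{-(r_1+\cdots+r_q)/\kappa} \leq 2^q c^q e^{-m/2\kappa}$, where $m+q$ is the support size and $m \leq 2\sum_k r_k$. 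Your far-spread-centers worry is then resolved for free: a configuration whose centers are spread farther than its radii allow vanishes identically, because disjointly supported operators commute, so spread centers force large radii. Hence the number of nonvanishing configurations at fixed $m$ is only polynomial, $f(m) \leq 2^q q!\, m^{2q}$, and summing $m^{2q} e^{-m/2\kappa}$ over $m > R$ absorbs the polynomial into a halved rate, giving $\O{e^{-R/4\kappa}}$ per site and $\O{n e^{-R/4\kappa}}$ in total. If you carried out your distance bookkeeping at the level of the original pieces instead of the Lemma-\ref{lem:commutator-between-H-K} strings, you would in effect be reconstructing this argument.
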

\begin{proof}
Let $H^{(k)} = \sum_{i\in\Lambda} H^{(k)}_i$ be the decomposition of $H^{(k)}$ into concentrated operators,
where $H^{(k)}_i \in \mathcal C_i(c,\kappa)$.
Define
\begin{align} 
    W_i = \sum_{i}\comm{H^{(q)}}{\dots\comm{H^{(2)}}{H^{(1)}_i}}
    = \sum_{i}\sum_{i_2,\dots,i_q \in\Lambda}\sum_{r_1,\dots,r_q = 0}^{\infty}
    \comm{H^{(q)}_{i_q,r_q}}{\dots\comm{H^{(2)}_{i_2,r_2}}{H^{(1)}_{i,r_1}}},
\end{align}
where we have further decomposed concentrated operators using their definitions.
Let $g(i,i_2,\dots,i_q,r_1,\dots,r_q)$ be the number of qubits in the joint support of $H^{(q)}_{i_q,r_q},\dots, H^{(2)}_{i_2,r_2},H^{(1)}_{i,r_1}$.
From $W_i$, we construct $V_i$ by discarding summands where $g(i,i_2,\dots,i_q,r_1,\dots,r_q)>q+R$.
\Cref{lem:commutator-between-H-K} implies $V = \sum_i V_i \in \mathcal H(c_q,2^q \kappa,q,1)$ for a constant $c_q$ that may depend on $q$.
By construction, $V_i$ is supported on at most $q+R$ qubits.

To bound the error $\norm{V_i - W_i}$, we recall $H^{(1)}_{i,r_1}$ is supported on at most $2r_1$ qubits around $i$.
Similarly,  $H^{(2)}_{i_2,r_2}$ is supported on at most $2r_2$ qubits around $i_2$.
Let $m + q = g(i,i_2,\dots,i_q,r_1,\dots,r_q)$. 
Besides the $q$ centers, the other $m$ qubits come from the balls of radii $r_1,\dots,r_q$. It follows that
\begin{align} 
    m \leq 2(r_1+r_2+\dots+r_q). 
\end{align}
Therefore,
\begin{align} 
    \norm{\comm{H^{(q)}_{i_q,r_q}}{\dots\comm{H^{(2)}_{i_2,r_2}}{H^{(1)}_{i,r_1}}}}
    \leq 2^q c^q e^{-r_1/\kappa}\dots e^{-r_q/\kappa} \leq 2^q c^q e^{-m/2\kappa}.  \label{eq:for-each-m}
\end{align}
So each summand that we dropped will be bounded by $2^q c^q e^{-m/2\kappa}$.

Next, we count the number of such summand, i.e. the number of the choices of $r_1,\dots,r_q$ and $i_2,\dots,i_q$ such that $g(i,i_2,\dots,i_q,r_1,\dots,r_q) = m+q$.
Denote this number by $f(m)$.
Note that $m \geq 2 r_1$, so $0\leq r_1 \leq m/2$ and the number of choices for $r_1$ is at most $m/2 + 1\leq m$, assuming $m \geq 2$.
Using a similar argument for $r_2,\dots,r_q$, the number of choices of these radii is at most $m^q$.

To count the number of centers $i_2,\dots,i_q$, we note that $d_{i,i_2}\leq r_1 + r_2 \leq m$ (otherwise, the commutator would vanish).
Similarly, $d_{i,i_3}\leq r_1 + 2r_2 + r_3 \leq 2m$.
Therefore, the number of choices for $i_2,\dots,i_q$ is at most
$(2m)(4m)(6m)\dots(2(q-1)m) \leq 2^q q! m^q$.
Combining with the counting argument for $r_1,\dots,r_q$, we have
\begin{align} 
    f(m) \leq 2^q q! m^{2q}. \label{eq:number-of-m} 
\end{align}

Combining \cref{eq:number-of-m,eq:for-each-m}, the error of approximating $W_i$ with $V_i$ is
\begin{align} 
    \norm{V_i - W_i}
    \leq \sum_{m = R+1}^{\infty} 4^q q! c^q m^{2q} e^{-m/2\kappa} 
    \leq \sum_{m = R+1}^{\infty} 4^q q!(2q!)(4\kappa)^{2q} c^q e^{-m/4\kappa}
    = \O{e^{-R/4\kappa}},
\end{align}
where we have used $m^{2q}\leq (2q)!(4\kappa)^{2q} e^{m/4\kappa}$ for $m\geq 2$.
Since $\sum_{i\in\Lambda} W_i = \comm{H^{(q)}}{\dots\comm{H^{(2)}}{H^{(1)}}}$, we arrive at the lemma.
\end{proof}

\section{The Lieb-Robinson bound and the locality of Magnus operators}\label{app:magnus-operator-structure}
In this section, we use the Lieb-Robinson bound and the machinery developed in \cref{app:concentrated-operator}
to bound the error of the Magnus operator.
We begin by showing that a $k$-local operator evolved under a geometrically local Hamiltonian can be written as a concentrated operator.

\begin{lemma}
\label{lem:decomposing-using-LR}
Let $B$ be a unit-norm operator supported on a single site $i$ and $A$ is a geometrically local Hamiltonian.
There exists constant $c, \kappa$ such that
\begin{align} 
    e^{iAt} B e^{-iAt} \in \mathcal C_i(c,\kappa). 
\end{align}
\end{lemma}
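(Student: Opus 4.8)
The plan is to peel the Heisenberg-evolved operator $B(t) \equiv e^{iAt} B e^{-iAt}$ into concentric shells supported in balls of growing radius around $i$, with shell norms decaying exponentially courtesy of the Lieb-Robinson bound. Let $\Lambda_r$ denote the ball of sites within distance $r$ of $i$, and let $\langle \cdot \rangle_r$ be the conditional expectation onto operators supported on $\Lambda_r$, realized as the Haar average $\langle O \rangle_r = \int dU\, U O U^\dagger$ over unitaries $U$ acting only on the complement $\Lambda_r^c$. By construction $\langle B(t)\rangle_r$ is supported entirely within radius $r$ of $i$.

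The first step is to control the truncation error $\norm{B(t) - \langle B(t)\rangle_r}$. Writing $B(t) - U B(t) U^\dagger = -\comm{U}{B(t)} U^\dagger$ and averaging over $U$ gives $\norm{B(t) - \langle B(t)\rangle_r} \leq \int dU\, \norm{\comm{U}{B(t)}}$. Since $B$ is supported on the single site $i$ while every $U$ is supported on $\Lambda_r^c$, which lies at distance at least $r$ from $i$, the Lieb-Robinson bound in \cref{eq:lb-bound} (with $\min\{|X|,|Y|\} = 1$ and $\norm{U}=\norm{B}=1$) yields $\norm{\comm{U}{B(t)}} \leq c_0(e^{v|t|}-1)e^{-r/\chi}$ uniformly in $U$, where $c_0, v$ are the constants of \cref{eq:lb-bound}. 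Hence $\norm{B(t) - \langle B(t)\rangle_r} \leq c_0(e^{v|t|}-1)e^{-r/\chi}$.

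With this estimate I would define the shells $B_{i,0} = \langle B(t)\rangle_0$ and $B_{i,r} = \langle B(t)\rangle_r - \langle B(t)\rangle_{r-1}$ for $r\geq 1$, so that $B(t) = \sum_{r=0}^\infty B_{i,r}$ telescopes and each $B_{i,r}$ is supported within radius $r$ of $i$. Inserting $B(t)$ and applying the triangle inequality gives $\norm{B_{i,r}} \leq \norm{B(t) - \langle B(t)\rangle_r} + \norm{B(t) - \langle B(t)\rangle_{r-1}} \leq c_0(e^{v|t|}-1)(1+e^{1/\chi})e^{-r/\chi}$, while the base case obeys $\norm{B_{i,0}} \leq \norm{B(t)} = 1$. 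Choosing $\kappa = \chi$ and $c = \max\{1,\, c_0(e^{v|t|}-1)(1+e^{1/\chi})\}$, the decomposition satisfies both requirements of \cref{def:concentrated-operator}, establishing $B(t)\in \mathcal C_i(c,\kappa)$.

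The main obstacle I anticipate is the bookkeeping around the conditional expectation: verifying that the Haar twirl over $\Lambda_r^c$ genuinely projects onto $\Lambda_r$-supported operators, and that the distance between the single site $i$ and the complement $\Lambda_r^c$ is at least $r$, so that the Lieb-Robinson exponent is correctly $-r/\chi$ up to harmless off-by-one shifts in the radius that can be absorbed into the constant $c$. One should also observe that $c$ inherits a dependence on $t$ through the factor $e^{v|t|}-1$; this is benign in the regime $t = O(1)$ in which the lemma is later applied.
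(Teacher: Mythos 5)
Your proposal is correct and follows essentially the same route as the paper's own proof: the Haar twirl over the complement of the ball (the paper's superoperator $\mathbb P_{i,r}$), the Lieb-Robinson commutator estimate giving $\norm{B(t) - \mathbb P_{i,r}B(t)} = \O{e^{vt - r/\chi}}$, the telescoping shell decomposition $B_{i,r} = \mathbb P_{i,r}B(t) - \mathbb P_{i,r-1}B(t)$, and the choice $\kappa = \chi$. Your treatment of the constants (the explicit $\max\{1,\cdot\}$ for the base shell and the $1+e^{1/\chi}$ factor) is in fact slightly more careful than the paper's.
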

\begin{proof}
Given an operator $O$, we define a superoperator
\begin{align} 
    \mathbb P_{i,r} O \equiv \int_{U\sim \mathcal U_r} dU U^\dag O U, \label{eq:projector}
\end{align}
where the integral is over the Haar measure of unitaries supported entirely outside a ball of radius $r$ from~$i$.
Consequently, $\mathbb P_{i,r} O$ is supported entirely within this ball.

Let $B(t) = e^{iAt} B e^{-iAt}$.
The Lieb-Robinson bound implies that $\mathbb P_{i,r} B(t)$ is a good approximation to $B(t)$ in the limit of large $r$~\cref{eq:lb-bound}.
Indeed, we have
\begin{align} 
    \norm{B(t) - \mathbb P_{i,r} B(t)}
    \leq \int_U dU~ \norm{\comm{U}{B(t)}}
    \leq \int_U dU~ K e^{vt - r/\chi}
    =  K e^{vt - r/\chi},
\end{align}
for some constant $K, v, \chi$.

We define
\begin{align} 
    B_{i,r} = \mathbb P_{i,r} B(t) - \mathbb P_{i,r-1} B(t) 
\end{align}
for $r = 0,1,\dots$.
Here, we define $\mathbb P_{i,-1} B(t) = 0$ for convenience.
Then, we have
\begin{align} 
    \norm{B_{i,r}} \leq \norm{B(t) - \mathbb P_{i,r} B(t)}
    + \norm{B(t) - \mathbb P_{i,r-1} B(t)} 
    \leq 2Ke^{vt - 1/\chi} e^{-r/\chi}.
\end{align}
Therefore, $B(t) \in \mathcal C_i(c,\kappa)$ with $c = 2K e^{vt-1/\chi}$ and $\kappa = \chi$.
\end{proof}

\begin{lemma}\label{lem:evolved-concentrated-1-local}
Let $A$ be a geometrically local Hamiltonian with an interaction range $\chi\geq 1$ and $B \in \mathcal H(c,\kappa,1,1)$ be a concentrated $1$-local Hamiltonian with $\kappa \leq \chi$.
We have
\begin{align} 
          e^{iAt} B e^{-iAt} \in \mathcal H(c',2\chi,1,1),
\end{align}   
for $c' = 16Kc \chi^2 e^{vt-1/\chi}$ for a constant $K$.
In particular, $c'$ is a constant if $t = \mathcal O(1)$.
\end{lemma}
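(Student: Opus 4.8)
The plan is to exploit the linearity of conjugation together with the structure of $B$. Since $B\in\mathcal H(c,\kappa,1,1)$, it decomposes as $B=\sum_\mu a_\mu B_\mu$ with each $B_\mu\in\mathcal C_{i_\mu}(c,\kappa)$ a concentrated operator centered at $i_\mu$ and $\sum_{\mu:i_\mu=l}\abs{a_\mu}\leq 1$ for every site $l$. Conjugation distributes over the sum, so $e^{iAt}Be^{-iAt}=\sum_\mu a_\mu\, e^{iAt}B_\mu e^{-iAt}$, and it leaves the coefficients $a_\mu$ and the centers $i_\mu$ untouched. Hence the interaction-degree condition of \cref{def:concentrated-k-local-H} is inherited automatically, and the whole problem reduces to the single claim that each evolved string $e^{iAt}B_\mu e^{-iAt}$ is again concentrated at $i_\mu$, now with decay length $2\chi$ and an enlarged constant $c'$. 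I would isolate this one-string statement as the crux and assemble the Hamiltonian-level conclusion from it in one line.

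For the one-string claim I would reuse the machinery in the proof of \cref{lem:decomposing-using-LR}: set $B(t)\equiv e^{iAt}B_\mu e^{-iAt}$ and form the annular pieces $B(t)_{i_\mu,r}=\mathbb P_{i_\mu,r}B(t)-\mathbb P_{i_\mu,r-1}B(t)$, so it suffices to bound $\norm{B(t)-\mathbb P_{i_\mu,r}B(t)}\leq\int_U dU\,\norm{\comm{U}{B(t)}}$ over Haar-random $U$ supported outside the radius-$r$ ball about $i_\mu$. Unlike \cref{lem:decomposing-using-LR}, $B_\mu$ is not single-site, so I would first expand it into its own concentrated pieces $B_\mu=\sum_s B_{\mu,s}$ with $B_{\mu,s}$ supported within radius $s$ and $\norm{B_{\mu,s}}\leq c\,e^{-s/\kappa}$, giving $\norm{\comm{U}{B(t)}}\leq\sum_s\norm{\comm{U}{e^{iAt}B_{\mu,s}e^{-iAt}}}$. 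For $s\leq r$ the supports of $U$ and $B_{\mu,s}$ are separated by at least $r-s$, so the Lieb-Robinson bound \cref{eq:lb-bound} yields $\norm{\comm{U}{e^{iAt}B_{\mu,s}e^{-iAt}}}\leq K(2s+1)c\,e^{-s/\kappa}(e^{vt}-1)e^{-(r-s)/\chi}$, while for $s>r$ I would use the trivial bound $2\norm{B_{\mu,s}}\leq 2c\,e^{-s/\kappa}$.

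The delicate point, which I expect to be the main obstacle, is summing the $s\leq r$ contributions at the boundary case $\kappa=\chi$. Pulling out $e^{-r/\chi}$ leaves $\sum_{s=0}^r (2s+1)e^{s(1/\chi-1/\kappa)}$; the hypothesis $\kappa\leq\chi$ makes the exponent nonpositive, so the exponential factor is at most one but the surviving polynomial sum grows like $(r+1)^2$. The tail is therefore controlled by $C(r+1)^2e^{-r/\chi}$, and the only clean way to recast this as genuine exponential decay is to spend half the rate, using $(r+1)^2e^{-r/\chi}\leq C'\chi^2 e^{-r/(2\chi)}$; this is exactly where the output decay length degrades from $\chi$ to $2\chi$. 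The $s>r$ piece contributes $O(e^{-r/\kappa})\leq O(e^{-r/(2\chi)})$ and is harmless. Combining both and passing to the annular differences gives $\norm{B(t)_{i_\mu,r}}\leq c'\,e^{-r/(2\chi)}$, so $e^{iAt}B_\mu e^{-iAt}\in\mathcal C_{i_\mu}(c',2\chi)$; carefully tracking the Lieb-Robinson constant $K$, the factor $c$, the energy factor $e^{vt}$, and the $\chi^2$ from the polynomial bound reproduces $c'=16Kc\chi^2 e^{vt-1/\chi}$, which is manifestly a constant when $t=\mathcal O(1)$. Feeding this back into the decomposition of $B$ then yields $e^{iAt}Be^{-iAt}\in\mathcal H(c',2\chi,1,1)$, completing the argument.
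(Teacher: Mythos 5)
Your proof is correct and takes essentially the same route as the paper's: reduce to a single concentrated operator centered at a site, decompose it radially, apply the Lieb-Robinson bound through the Haar projector $\mathbb P_{i,r}$ of \cref{lem:decomposing-using-LR} to each radial piece with the $(2s+1)$ support factor, and absorb the resulting $(r+1)^2$ polynomial into $e^{-r/2\chi}$ — exactly the step where the paper also degrades the decay length from $\chi$ to $2\chi$ and picks up the $\chi^2$ in $c'$. The only difference is bookkeeping: the paper evolves each radial piece, projects it onto annuli (the operators $V_{i,r,s}$), and regroups along $r+s=\const$, whereas you project the whole evolved operator onto annuli and sum the commutator bounds over the source's radial pieces; the two organizations yield the same estimate.
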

\begin{proof}
Let $B = \sum_{i\in \Lambda} B_i = \sum_{i\in\Lambda} \sum_{r = 0}^{\infty} B_{i,r}$ be a decomposition of $B$ into concentrated operators, where $B_i\in\mathcal C_i(c,\kappa)$.
For each $i$, let $B_{i,r}(t) = e^{iAt} B_{i,r} e^{-iAt}$.
We have
\begin{align} 
     e^{iAt} B_i e^{-iAt} = \sum_{r = 0}^{\infty} 
     \left[\mathbb P_{i,r} B_{i,r}(t) + \sum_{s = 1}^{\infty} (\mathbb P_{i,r+s} - \mathbb P_{i,r+s-1}) B_{i,r}(t) \right],
\end{align}
where $\mathbb P_{i,s}$ is defined in \cref{eq:projector}.
Let $V_{i,r,s} =(\mathbb P_{i,r+s} - \mathbb P_{i,r+s-1}) B_{i,r}(t) $ for $s \geq 1$ and $V_{i,r,0} = \mathbb P_{i,r} B_{i,r}(t)$.
It follows from the Lieb-Robinson bound that
\begin{align} 
    \norm{V_{i,r,s}} \leq 2K e^{vt-1/\chi} (2r+1) e^{-s/\chi}\norm{B_{i,r}}
    \leq 2Kc e^{vt-1/\chi} (2r+1) e^{-(r+s)/\chi} ,
\end{align}
where the factor $2r+1$ comes from the size of the support of $B_{i,r}(0)$.
Under this notation,
\begin{align} 
   B_i^{'}\equiv  e^{iAt} B_i e^{-iAt} = \sum_{r = 0}^{\infty} \sum_{s = 0}^{\infty} V_{i,r,s},
\end{align}
where $V_{i,r,s}$ is supported entirely within a ball of radius $r+s$ around $i$.
Next, we group $V_{i,r,s}$ based on $t = r+s$.
Define
\begin{align} 
    B'_{i,t} = \sum_{r = 0}^{t} V_{i,r,t-r}.   
\end{align}
It follows that $B'_{i,t}$ is supported within a ball of radius $t$ from $i$ and
\begin{align} 
     \norm{B'_{i,t}}
     \leq 2Kc e^{vt-1/\chi} \sum_{r = 0}^{t} (2r+1) e^{-t/\chi}
     =16Kc \chi^2 e^{vt-1/\chi}  e^{-t/2\chi}
\end{align}
where we have used $\sum_{r = 0}^{t} (2r+1) \leq 8\chi^2 e^{-t/2\chi}$ for all $\chi \geq 1$.
Therefore $B'_{i} \in\mathcal C_i(c',2\chi)$ with $c' = 16Kc \chi^2 e^{vt-1/\chi}$ and the lemma follows.
\end{proof}

Several corollaries follow immediately:
\begin{corollary}\label{cor:BI}
If $B$ is a $k$-local Hamiltonian with an effective interaction degree $d$ and $A$ is a geometrically local Hamiltonian with an interaction range $\chi$, then 
\begin{align} 
        e^{iAt} B e^{-iAt} \in \mathcal H(c,\kappa,k,d), 
\end{align}    
for some $c$ that may depend on $t$ and a constant $\kappa$.
Moreover, if $t = O(1)$, then $c$ is a constant. 
\end{corollary}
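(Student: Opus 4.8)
The plan is to reduce the general $k$-local case to the single-site result of \cref{lem:decomposing-using-LR}, exploiting that conjugation by $e^{iAt}$ is an automorphism of the operator algebra and therefore distributes over products of operators supported on distinct sites. First I would write $B = \sum_X B_X$, where each $B_X$ is supported on a set $X = \{i_1,\dots,i_k\}$ of at most $k$ sites, and expand each $B_X$ in the single-site Pauli basis on $X$, namely $B_X = \sum_{\vec{P}} \beta_{X,\vec{P}}\, P^{(1)}_{i_1}\cdots P^{(k)}_{i_k}$, where each $P^{(j)}_{i_j}$ is a unit-norm single-site Pauli. Since the trace norm of a $k$-qubit Pauli is $2^k$, each coefficient satisfies $|\beta_{X,\vec{P}}|\le \norm{B_X}$, and summing over the $4^k$ strings gives $\sum_{\vec{P}} |\beta_{X,\vec{P}}| \le 4^k \norm{B_X}$. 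This $\ell_1$ control is exactly what I will need later to track the effective interaction degree.

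Conjugating term by term and using $e^{iAt}(PQ)e^{-iAt} = (e^{iAt}Pe^{-iAt})(e^{iAt}Qe^{-iAt})$ gives
\begin{align}
e^{iAt} B e^{-iAt} = \sum_X \sum_{\vec{P}} \beta_{X,\vec{P}} \prod_{j=1}^{k}\left(e^{iAt} P^{(j)}_{i_j} e^{-iAt}\right).
\end{align}
Each factor is the conjugate of a unit-norm single-site operator, so \cref{lem:decomposing-using-LR} places it in $\mathcal{C}_{i_j}(c,\kappa)$ for constants $c,\kappa$ that depend only on $t$ and $\chi$, with $\kappa = \chi$ a constant and $c = O(1)$ once $t = O(1)$. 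Hence each product is a concentrated $k$-local string with center set $S = X$, and the displayed expression is already a decomposition $\sum_\mu a_\mu H_\mu$ of the form required by \cref{def:concentrated-k-local-H}, with $\mu = (X,\vec{P})$, $a_\mu = \beta_{X,\vec{P}}$, and $S(H_\mu) = X$.

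It then remains to verify the effective-interaction-degree condition. For a fixed site $l$, only terms with $l\in X$ contribute, so
\begin{align}
\sum_{\mu: l\in S(H_\mu)} |a_\mu| = \sum_{X: l\in X}\sum_{\vec{P}} |\beta_{X,\vec{P}}| \le 4^k \sum_{X: l\in X}\norm{B_X} \le 4^k d,
\end{align}
which shows $e^{iAt}Be^{-iAt} \in \mathcal{H}(c,\kappa,k,4^k d)$. Finally, I would absorb the spurious factor $4^k$ using the rescaling invariance recorded in \cref{def:concentrated-k-local-H}: taking $g = 4$ sends $(c,4^k d)\mapsto (4c,d)$, yielding $e^{iAt}Be^{-iAt}\in\mathcal{H}(4c,\kappa,k,d)$, i.e.\ the claim with new constant $c' = 4c$, and the $t = O(1)$ statement is inherited directly from \cref{lem:decomposing-using-LR}.

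There is no serious analytic obstacle here, since \cref{lem:decomposing-using-LR} already does the Lieb-Robinson work; the only things to handle with care are bookkeeping. Specifically, the $4^k$ overhead from the Pauli expansion of each $k$-local term must be routed through the degree $d$ and then eliminated by rescaling, rather than being left to inflate $c$ uncontrollably, and one must confirm that all $k$ evolved factors lie in the \emph{same} pair $(c,\kappa)$—which holds because the Lieb-Robinson constants in \cref{lem:decomposing-using-LR} are independent of both the site and the choice of single-site Pauli.
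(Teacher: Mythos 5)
Your proposal is correct and follows essentially the route the paper intends: the paper states \cref{cor:BI} as an immediate consequence of \cref{lem:decomposing-using-LR}, and your argument—Pauli-expanding each $B_X$, conjugating factor by factor so each single-site Pauli lands in $\mathcal C_{i_j}(c,\kappa)$ with site-independent constants, and bounding the degree—is exactly the omitted verification. Your handling of the $4^k$ overhead via the rescaling invariance $c \to gc$, $d \to d/g^k$ of \cref{def:concentrated-k-local-H} is the right way to recover the stated degree $d$, and the $t=O(1)$ claim indeed descends directly from the constant $c = 2Ke^{vt-1/\chi}$ in \cref{lem:decomposing-using-LR}.
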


\cref{cor:BI,lem:commutator-between-H-K} imply
\begin{corollary}\label{cor:Omega-structure}
Let $ B$ be a $k$-local Hamiltonian with an effective interaction degree $d$ and $A$ be a geometrically local Hamiltonian.
Let $\Omega_q(t)$ be the $q$th-order Magnus operator generating the interaction-picture evolutions of $H = A + \alpha B$.
For $t = O(1)$, we have
\begin{align} 
        &\Omega_q(t) \in \mathcal H(c_q,\kappa_q,q k,\alpha^q d^q t^q), \text{ and} \label{eq:bound-on-Omega-q}\\
        &\Omega_q^{\prime}(t) \in \mathcal H(c_q,\kappa_q,q k,\alpha^q d^q t^{q-1})
\end{align}    
for some constants $c_q$ and $\kappa_q$.
In addition, \cref{eq:bound-on-Omega-q} implies
\begin{align} 
     \sum_{j = 1}^q \Omega_j(t) \in \mathcal H(c_q^{\prime},\kappa_q^{\prime},q k,\alpha td)
\end{align}
for $\alpha d t \leq 1$.
\end{corollary}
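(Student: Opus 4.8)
The plan is to start from the closed form of the Magnus operator in \cref{lemma:magnus-expansion}, which expresses $\Omega_q(t)$ as a finite linear combination, over permutations $\sigma$ of $\{1,\dots,q\}$ with bounded coefficients $\tfrac{1}{q^2}\binom{q-1}{d_b}^{-1}$, of time-ordered simplex integrals of $(q-1)$-fold nested commutators $[\mathcal{A}(\tau_1),[\dots,[\mathcal{A}(\tau_{q-1}),\mathcal{A}(\tau_q)]\dots]]$ with $\mathcal{A}(\tau)=B_I(\tau)$. The first step is to control a single time slice: by \cref{cor:BI}, the evolved operator $e^{iA\tau}Be^{-iA\tau}\in\mathcal{H}(c,\kappa,k,d)$ with $c,\kappa$ constants for $\tau\in[0,t]$ and $t=O(1)$, and since multiplying by the scalar $-i\alpha$ rescales only the per-site coefficient sum, $B_I(\tau)\in\mathcal{H}(c,\kappa,k,\alpha d)$ uniformly in $\tau$.

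Next I would push the nested commutator through \cref{lem:commutator-between-H-K} repeatedly. Each application combines two concentrated $k$-local Hamiltonians, adding their localities, multiplying their effective degrees, doubling $\kappa$, and inflating $c$ by a bounded factor depending only on the localities; before each application the constants $c,\kappa$ of both arguments are raised to a common value using the monotonicity $\mathcal{H}(c,\kappa,k,d)\subseteq\mathcal{H}(c',\kappa',k,d)$ for $c'\ge c$ and $\kappa'\ge\kappa$ inherited from \cref{def:concentrated-operator}. After $q-1$ nestings of $q$ copies of $B_I(\tau_i)$, the integrand lies in $\mathcal{H}(\tilde c_q,2^{q-1}\kappa,qk,(\alpha d)^q)$ for a constant $\tilde c_q$ depending only on $q,k,\kappa$, and crucially this membership is uniform in the times $\tau_1,\dots,\tau_q$.

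The remaining work is the integration and the permutation sum, for which I would record two closure properties of $\mathcal{H}$. First, it is closed under addition, with effective degrees adding, which is immediate from \cref{def:concentrated-k-local-H} by taking the union of the two decompositions. Second, it is closed under integration over a bounded domain, with the effective degree multiplied by the measure of the domain; this I would prove by Riemann sums, since a finite sum of $N$ sample points, each in $\mathcal{H}(\tilde c_q,2^{q-1}\kappa,qk,(\alpha d)^q\Delta\tau)$, lies in the class with effective degree $(\alpha d)^q\sum_n\Delta\tau$, and the limit is justified because the defining inequalities of \cref{def:concentrated-operator,def:concentrated-k-local-H} are closed. Integrating over the time-ordered simplex of volume $t^q/q!$ and summing the $q!$ permutations against the bounded coefficients then gives $\Omega_q(t)\in\mathcal{H}(c_q,\kappa_q,qk,\alpha^q d^q t^q)$, where all $q$-dependent multiplicative constants are absorbed into $c_q$ through the rescaling invariance $c\to gc$, $d\to d/g^{qk}$ noted in \cref{def:concentrated-k-local-H}. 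The derivative bound follows by differentiating the simplex integral in $t$, which freezes $\tau_1=t$ and removes the outermost integration, replacing $t^q$ by $t^{q-1}$.

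Finally, for $\sum_{j=1}^q\Omega_j(t)$ I would embed each $jk$-local $\Omega_j$ into the common $qk$-local class by padding its strings with identity factors, legitimate since $I\in\mathcal{C}_i(c,\kappa)$ whenever $c\ge1$, placed on sites already in the support so the per-site degree count is unchanged, and set $\kappa_q'=\max_j\kappa_j$ and $c_q'=\max_j c_j$. Additivity of effective degrees then yields a combined degree $\sum_{j=1}^q(\alpha dt)^j$, which for $\alpha dt\le1$ is at most a constant multiple of the leading term $\alpha dt$; absorbing that constant gives $\sum_{j=1}^q\Omega_j(t)\in\mathcal{H}(c_q',\kappa_q',qk,\alpha td)$. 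I expect the main obstacle to be the integration-closure step: justifying that integrating an $\mathcal{H}$-valued function stays in $\mathcal{H}$ with the effective degree scaled by the measure requires the Riemann-sum limit together with the closedness of the defining conditions, and is the one place where the argument is genuinely more than mechanical bookkeeping through \cref{lem:commutator-between-H-K}.
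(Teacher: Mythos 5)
Your proposal is correct and follows essentially the same route as the paper: the paper's entire proof of this corollary is the one-line assertion that \cref{cor:BI} and \cref{lem:commutator-between-H-K} imply it, and your argument is a faithful expansion of exactly that---$B_I(\tau)\in\mathcal H(c,\kappa,k,\alpha d)$ from \cref{cor:BI}, nested commutators handled by iterating \cref{lem:commutator-between-H-K} with the monotonicity in $(c,\kappa)$, and the time integrals, permutation sum, and rescaling-invariance bookkeeping that the paper leaves implicit. Your Riemann-sum treatment of closure under integration and the identity-padding for $\sum_j\Omega_j$ are more careful than anything in the paper and introduce no errors.
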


\Cref{lem:commutators-between-1-local-Ham,cor:BI,lem:evolved-concentrated-1-local} imply a more robust structure of the Magnus operators:
\begin{corollary}\label{cor:Omega-structure-geometrically-local}
Let $ B$ and $A$ be geometrically local Hamiltonians and $\chi$ be the interaction range of $A$.
Let $\Omega_q(t)$ be the $q$th-order Magnus operator generating the interaction-picture evolutions of $H = A + \alpha B$.
For $t = O(1)$ and for any $R \geq 1$, there exists an operator $\Omega_{\text{loc},q} = \sum_{i\in\Lambda}\Omega_{\text{loc},q,i}$ such that
\begin{itemize}
    \item $\Omega_{\text{loc},q}\in \mathcal H(c_q,2^{q+1} \chi,q,1)$ for a constant $c_q$ depending on only $q,\chi$,
    \item $\Omega_{\text{loc},q,i}$ is supported on at most $q + R$, and
    \item $\norm{\Omega_{q}-\Omega_{\text{loc},q}} = \O{n t^q \alpha^q e^{-R/8\chi}}$.  
\end{itemize}
\end{corollary}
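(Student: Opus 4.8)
The plan is to reduce the corollary to the three building blocks already established: \cref{lem:geometrically-local-is-also-concentrated}, \cref{lem:evolved-concentrated-1-local}, and \cref{lem:commutators-between-1-local-Ham}. Recall from \cref{lemma:magnus-expansion} that $\Omega_q(t)$ is a weighted time-ordered integral over the simplex $0\leq \tau_q \leq \cdots \leq \tau_1 \leq t$ of $q$-fold nested commutators of $B_I(\tau_\ell) = -i\alpha\, e^{iA\tau_\ell} B e^{-iA\tau_\ell}$, with coefficients $\tfrac{1}{q^2}(-1)^{d_b}\binom{q-1}{d_b}^{-1}$ summed over permutations. First I would pull the scalar factor $(-i\alpha)^q$ out of the $q$ copies of $B_I$, so that the object left inside the integral is a nested commutator of the unit-scale evolved operators $\tilde B(\tau) \equiv e^{iA\tau} B e^{-iA\tau}$.

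Next I would establish the pointwise-in-time structure of these commutators. Since $B$ is geometrically local, \cref{lem:geometrically-local-is-also-concentrated} places it in $\mathcal H(c_0,\chi,1,1)$ (choosing the free parameter $\kappa=\chi$), and \cref{lem:evolved-concentrated-1-local} then gives $\tilde B(\tau) \in \mathcal H(c',2\chi,1,1)$ with $c'$ constant for $\tau \leq t = O(1)$. For each fixed time tuple I would feed $H^{(\ell)} = \tilde B(\tau_\ell)$ into \cref{lem:commutators-between-1-local-Ham} with $\kappa = 2\chi$; this produces a local surrogate $V(\tau_1,\dots,\tau_q) = \sum_{i\in\Lambda} V_i$ lying in $\mathcal H(c_q, 2^q\cdot 2\chi, q, 1) = \mathcal H(c_q, 2^{q+1}\chi, q, 1)$, whose pieces $V_i$ are supported on at most $q+R$ qubits and which approximates the nested commutator with error $\O{n e^{-R/(4\cdot 2\chi)}} = \O{n e^{-R/8\chi}}$. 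This already fixes the concentration scale $2^{q+1}\chi$, the locality $q$, and the decay rate $e^{-R/8\chi}$ demanded by the corollary.

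I would then define $\Omega_{\text{loc},q}(t)$ by substituting $V(\tau_1,\dots,\tau_q)$ for the nested commutator inside the Magnus integral, retaining the $(-i\alpha)^q$ prefactor and all Magnus coefficients. Because the parameters $c', 2\chi$ and the support cutoff $q+R$ are uniform over $\tau \leq t = O(1)$, the integrated operator is again a lattice sum $\sum_i \Omega_{\text{loc},q,i}$ of terms supported on at most $q+R$ qubits with concentration scale $2^{q+1}\chi$. Integrating the per-commutator error over the ordered simplex (volume $t^q/q!$), together with the permutation sum ($q!$ terms weighted by $\tfrac{1}{q^2}\binom{q-1}{d_b}^{-1}\leq \tfrac{1}{q^2}$) and the factor $\alpha^q$, yields exactly $\norm{\Omega_q - \Omega_{\text{loc},q}} = \O{n t^q \alpha^q e^{-R/8\chi}}$. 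Finally, the effective degree accumulated from this same integration is $\O{\alpha^q t^q}$; invoking the rescaling invariance of $\mathcal H$ (with $g = \alpha t$, so that $d \to d/g^q = \O{1}$ and $c_q \to \alpha t\, c_q$, bounded since $\alpha<1$ and $t=O(1)$) normalizes the class to $\mathcal H(c_q, 2^{q+1}\chi, q, 1)$ with $c_q$ depending only on $q,\chi$.

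The step I expect to be the main obstacle is verifying that $\mathcal H$ is genuinely closed under the time integration: that integrating a family of concentrated $q$-local Hamiltonians with uniformly bounded parameters over the bounded simplex produces another member of $\mathcal H$ whose effective interaction degree is scaled by the integration volume, and, more delicately, that the hard $q+R$ support cutoff survives integration. The latter is safe because for all $\tau \leq t = O(1)$ the retained strings and their centers sit inside balls of uniformly bounded radius, so the union of supports over the integration region does not inflate past the $q+R$ budget. Once this closure property is in hand, the remainder is routine bookkeeping of the Magnus coefficients and the rescaling to $d=1$.
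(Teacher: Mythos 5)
Your proposal is correct and follows essentially the same route the paper intends: the paper states \cref{cor:Omega-structure-geometrically-local} without a written proof, as an immediate consequence of \cref{lem:evolved-concentrated-1-local,cor:BI,lem:commutators-between-1-local-Ham}, and your chain (geometric locality of $B$ via \cref{lem:geometrically-local-is-also-concentrated}, evolution under $A$ giving membership in $\mathcal H(c',2\chi,1,1)$ via \cref{lem:evolved-concentrated-1-local}, nested commutators handled by \cref{lem:commutators-between-1-local-Ham} with $\kappa = 2\chi$ to produce the $2^{q+1}\chi$ scale and $e^{-R/8\chi}$ decay, then integration over the Magnus simplex with the $\alpha^q t^q$ bookkeeping and rescaling to degree $1$) is precisely that implication with the details filled in. Your resolution of the one point the paper glosses over---making the light-cone truncation criterion time-uniform (based on the ball indices $i_\ell, r_\ell$ rather than the instantaneous supports) so that the retained summands stay inside fixed unions of balls and the $q+R$ support budget survives the time integration---is the right way to close that gap.
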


\bibliographystyle{unsrt}
\bibliography{Ref,zotero}

\pagebreak

\newpage

		\end{document}